\newtheorem{theorem}{Theorem}
\newtheorem{proposition}{Proposition}
\newtheorem{definition}{Definition}
\newtheorem{lemma}{Lemma}
\newtheorem{corollary}{Corollary}
\newenvironment{proof}[1][Proof]{\noindent \textbf{#1.} }{\qedsymbol}
\newcommand{\qedsymbol}{\hspace{\fill}\rule{1.5ex}{1.5ex}}
\let\pdfoutput=\undefined\fi
\chardef\@x10\chardef\@xv60
\def\tcitime{
\def\@time{%
  \@minute\time\@hour\@minute\divide\@hour\@xv
  \ifnum\@hour<\@x 0\fi\the\@hour:%
  \multiply\@hour\@xv\advance\@minute-\@hour
  \ifnum\@minute<\@x 0\fi\the\@minute
  }}%
\def\x@hyperref#1#2#3{%
   \catcode`\~ = 12
   \catcode`\$ = 12
   \catcode`\_ = 12
   \catcode`\# = 12
   \catcode`\& = 12
   \y@hyperref{#1}{#2}{#3}%
}
\def\y@hyperref#1#2#3#4{%
   #2\ref{#4}#3
   \catcode`\~ = 13
   \catcode`\$ = 3
   \catcode`\_ = 8
   \catcode`\# = 6
   \catcode`\& = 4
}
\def\QCTOpt[#1]#2{%
  \def\QCTOptB{#1}
  \def\QCTOptA{#2}
}
\def\QCTNOpt#1{%
  \def\QCTOptA{#1}
  \let\QCTOptB\empty
}
\def\Qct{%
  \@ifnextchar[{%
    \QCTOpt}{\QCTNOpt}
}
\def\QCBOpt[#1]#2{%
  \def\QCBOptB{#1}%
  \def\QCBOptA{#2}%
}
\def\QCBNOpt#1{%
  \def\QCBOptA{#1}%
  \let\QCBOptB\empty
}
\def\Qcb{%
  \@ifnextchar[{%
    \QCBOpt}{\QCBNOpt}%
}
\def\PrepCapArgs{%
  \ifx\QCBOptA\empty
    \ifx\QCTOptA\empty
      {}%
    \else
      \ifx\QCTOptB\empty
        {\QCTOptA}%
      \else
        [\QCTOptB]{\QCTOptA}%
      \fi
    \fi
  \else
    \ifx\QCBOptA\empty
      {}%
    \else
      \ifx\QCBOptB\empty
        {\QCBOptA}%
      \else
        [\QCBOptB]{\QCBOptA}%
      \fi
    \fi
  \fi
}
\def\GRAPHICSPS#1{%
 \ifcase\GRAPHICSTYPE
   \special{ps: #1}%
 \or
   \special{language "PS", include "#1"}%
 \fi
}%
\def\graffile#1#2#3#4{%
    \bgroup
	   \@inlabelfalse
       \leavevmode
       \@ifundefined{bbl@deactivate}{\def~{\string~}}{\activesoff}%
        \raise -#4 \BOXTHEFRAME{%
           \hbox to #2{\raise #3\hbox to #2{\null #1\hfil}}}%
    \egroup
}%
\def\draftbox#1#2#3#4{%
 \leavevmode\raise -#4 \hbox{%
  \frame{\rlap{\protect\tiny #1}\hbox to #2%
   {\vrule height#3 width\z@ depth\z@\hfil}%
  }%
 }%
}%
\let\nographics=\@msidraft
\newif\ifwasdraft
\def\GRAPHIC#1#2#3#4#5{%
   \ifnum\@msidraft=\@ne\draftbox{#2}{#3}{#4}{#5}%
   \else\graffile{#1}{#3}{#4}{#5}%
   \fi
}
\def\addtoLaTeXparams#1{%
    \edef\LaTeXparams{\LaTeXparams #1}}%
\newif\ifBoxFrame \BoxFramefalse
\newif\ifOverFrame \OverFramefalse
\newif\ifUnderFrame \UnderFramefalse
\def\BOXTHEFRAME#1{%
   \hbox{%
      \ifBoxFrame
         \frame{#1}%
      \else
         {#1}%
      \fi
   }%
}
\def\doFRAMEparams#1{\BoxFramefalse\OverFramefalse\UnderFramefalse\readFRAMEparams#1\end}%
\def\readFRAMEparams#1{%
 \ifx#1\end%
  \let\next=\relax
  \else
  \ifx#1i\dispkind=\z@\fi
  \ifx#1d\dispkind=\@ne\fi
  \ifx#1f\dispkind=\tw@\fi
  \ifx#1t\addtoLaTeXparams{t}\fi
  \ifx#1b\addtoLaTeXparams{b}\fi
  \ifx#1p\addtoLaTeXparams{p}\fi
  \ifx#1h\addtoLaTeXparams{h}\fi
  \ifx#1X\BoxFrametrue\fi
  \ifx#1O\OverFrametrue\fi
  \ifx#1U\UnderFrametrue\fi
  \ifx#1w
    \ifnum\@msidraft=1\wasdrafttrue\else\wasdraftfalse\fi
    \@msidraft=\@ne
  \fi
  \let\next=\readFRAMEparams
  \fi
 \next
 }%
\def\IFRAME#1#2#3#4#5#6{%
      \bgroup
      \let\QCTOptA\empty
      \let\QCTOptB\empty
      \let\QCBOptA\empty
      \let\QCBOptB\empty
      #6%
      \parindent=0pt
      \leftskip=0pt
      \rightskip=0pt
      \setbox0=\hbox{\QCBOptA}%
      \@tempdima=#1\relax
      \ifOverFrame
          \typeout{This is not implemented yet}%
          \show\HELP
      \else
         \ifdim\wd0>\@tempdima
            \advance\@tempdima by \@tempdima
            \ifdim\wd0 >\@tempdima
               \setbox1 =\vbox{%
                  \unskip\hbox to \@tempdima{\hfill\GRAPHIC{#5}{#4}{#1}{#2}{#3}\hfill}%
                  \unskip\hbox to \@tempdima{\parbox[b]{\@tempdima}{\QCBOptA}}%
               }%
               \wd1=\@tempdima
            \else
               \textwidth=\wd0
               \setbox1 =\vbox{%
                 \noindent\hbox to \wd0{\hfill\GRAPHIC{#5}{#4}{#1}{#2}{#3}\hfill}\\%
                 \noindent\hbox{\QCBOptA}%
               }%
               \wd1=\wd0
            \fi
         \else
            \ifdim\wd0>0pt
              \hsize=\@tempdima
              \setbox1=\vbox{%
                \unskip\GRAPHIC{#5}{#4}{#1}{#2}{0pt}%
                \break
                \unskip\hbox to \@tempdima{\hfill \QCBOptA\hfill}%
              }%
              \wd1=\@tempdima
           \else
              \hsize=\@tempdima
              \setbox1=\vbox{%
                \unskip\GRAPHIC{#5}{#4}{#1}{#2}{0pt}%
              }%
              \wd1=\@tempdima
           \fi
         \fi
         \@tempdimb=\ht1
         \advance\@tempdimb by -#2
         \advance\@tempdimb by #3
         \leavevmode
         \raise -\@tempdimb \hbox{\box1}%
      \fi
      \egroup%
}%
\def\DFRAME#1#2#3#4#5{%
  \vspace\topsep
  \hfil\break
  \bgroup
     \leftskip\@flushglue
	 \rightskip\@flushglue
	 \parindent\z@
	 \parfillskip\z@skip
     \let\QCTOptA\empty
     \let\QCTOptB\empty
     \let\QCBOptA\empty
     \let\QCBOptB\empty
	 \vbox\bgroup
        \ifOverFrame 
           #5\QCTOptA\par
        \fi
        \GRAPHIC{#4}{#3}{#1}{#2}{\z@}%
        \ifUnderFrame 
           \break#5\QCBOptA
        \fi
	 \egroup
  \egroup
  \vspace\topsep
  \break
}%
\def\FFRAME#1#2#3#4#5#6#7{%
  \@ifundefined{floatstyle}
    {
     \begin{figure}[#1]%
    }
    {
	 \ifx#1h
      \begin{figure}[H]%
	 \else
      \begin{figure}[#1]%
	 \fi
	}
  \let\QCTOptA\empty
  \let\QCTOptB\empty
  \let\QCBOptA\empty
  \let\QCBOptB\empty
  \ifOverFrame
    #4
    \ifx\QCTOptA\empty
    \else
      \ifx\QCTOptB\empty
        \caption{\QCTOptA}%
      \else
        \caption[\QCTOptB]{\QCTOptA}%
      \fi
    \fi
    \ifUnderFrame\else
      \label{#5}%
    \fi
  \else
    \UnderFrametrue%
  \fi
  \begin{center}\GRAPHIC{#7}{#6}{#2}{#3}{\z@}\end{center}%
  \ifUnderFrame
    #4
    \ifx\QCBOptA\empty
      \caption{}%
    \else
      \ifx\QCBOptB\empty
        \caption{\QCBOptA}%
      \else
        \caption[\QCBOptB]{\QCBOptA}%
      \fi
    \fi
    \label{#5}%
  \fi
  \end{figure}%
 }%
\def\makeactives{
  \catcode`\"=\active
  \catcode`\;=\active
  \catcode`\:=\active
  \catcode`\'=\active
  \catcode`\~=\active
}
   \gdef\activesoff{%
      \def"{\string"}%
      \def;{\string;}%
      \def:{\string:}%
      \def'{\string'}%
      \def~{\string~}%
    }
\def\FRAME#1#2#3#4#5#6#7#8{%
 \bgroup
 \ifnum\@msidraft=\@ne
   \wasdrafttrue
 \else
   \wasdraftfalse%
 \fi
 \def\LaTeXparams{}%
 \dispkind=\z@
 \def\LaTeXparams{}%
 \doFRAMEparams{#1}%
 \ifnum\dispkind=\z@\IFRAME{#2}{#3}{#4}{#7}{#8}{#5}\else
  \ifnum\dispkind=\@ne\DFRAME{#2}{#3}{#7}{#8}{#5}\else
   \ifnum\dispkind=\tw@
    \edef\@tempa{\noexpand\FFRAME{\LaTeXparams}}%
    \@tempa{#2}{#3}{#5}{#6}{#7}{#8}%
    \fi
   \fi
  \fi
  \ifwasdraft\@msidraft=1\else\@msidraft=0\fi{}%
  \egroup
 }%
\def\TEXUX#1{"texux"}
\def\limfunc#1{\mathop{\rm #1}}%
\long\def\QQQ#1#2{%
     \long\expandafter\def\csname#1\endcsname{#2}}%
\long\def\QQA#1#2{}%
\def\QTR#1#2{{\csname#1\endcsname {#2}}}%
\def\EXPAND#1[#2]#3{}%
\def\NOEXPAND#1[#2]#3{}%
\def\LaTeXparent#1{}%
\def\ChildStyles#1{}%
\def\ChildDefaults#1{}%
\def\QTagDef#1#2#3{}%
  \providecommand{\UNICODE}[2][]{\protect\rule{.1in}{.1in}}
  \providecommand{\U}[1]{\protect\rule{.1in}{.1in}}
\def\QQfnmark#1{\footnotemark}
 \def\abstract{%
  \if@twocolumn
   \section*{Abstract (Not appropriate in this style!)}%
   \else \small 
   \begin{center}{\bf Abstract\vspace{-.5em}\vspace{\z@}}\end{center}%
   \quotation 
   \fi
  }%
   \def\registered{\relax\ifmmode{}\r@gistered
                    \else$\m@th\r@gistered$\fi}%
 \def\r@gistered{^{\ooalign
  {\hfil\raise.07ex\hbox{$\scriptstyle\rm\text{R}$}\hfil\crcr
  \mathhexbox20D}}}}{}%
\newdimen\theight
\def\newfmtname{LaTeX2e}
  \DeclareOldFontCommand{\rm}{\normalfont\rmfamily}{\mathrm}
  \DeclareOldFontCommand{\sf}{\normalfont\sffamily}{\mathsf}
  \DeclareOldFontCommand{\tt}{\normalfont\ttfamily}{\mathtt}
  \DeclareOldFontCommand{\bf}{\normalfont\bfseries}{\mathbf}
  \DeclareOldFontCommand{\it}{\normalfont\itshape}{\mathit}
  \DeclareOldFontCommand{\sl}{\normalfont\slshape}{\@nomath\sl}
  \DeclareOldFontCommand{\sc}{\normalfont\scshape}{\@nomath\sc}
\def\alpha{{\Greekmath 010B}}%
\def\beta{{\Greekmath 010C}}%
\def\gamma{{\Greekmath 010D}}%
\def\delta{{\Greekmath 010E}}%
\def\epsilon{{\Greekmath 010F}}%
\def\zeta{{\Greekmath 0110}}%
\def\eta{{\Greekmath 0111}}%
\def\theta{{\Greekmath 0112}}%
\def\iota{{\Greekmath 0113}}%
\def\kappa{{\Greekmath 0114}}%
\def\lambda{{\Greekmath 0115}}%
\def\mu{{\Greekmath 0116}}%
\def\nu{{\Greekmath 0117}}%
\def\xi{{\Greekmath 0118}}%
\def\pi{{\Greekmath 0119}}%
\def\rho{{\Greekmath 011A}}%
\def\sigma{{\Greekmath 011B}}%
\def\tau{{\Greekmath 011C}}%
\def\upsilon{{\Greekmath 011D}}%
\def\phi{{\Greekmath 011E}}%
\def\chi{{\Greekmath 011F}}%
\def\psi{{\Greekmath 0120}}%
\def\omega{{\Greekmath 0121}}%
\def\varepsilon{{\Greekmath 0122}}%
\def\vartheta{{\Greekmath 0123}}%
\def\varpi{{\Greekmath 0124}}%
\def\varrho{{\Greekmath 0125}}%
\def\varsigma{{\Greekmath 0126}}%
\def\varphi{{\Greekmath 0127}}%
\def\nabla{{\Greekmath 0272}}
\def\FindBoldGroup{%
   {\setbox0=\hbox{$\mathbf{x\global\edef\theboldgroup{\the\mathgroup}}$}}%
}
\def\Greekmath#1#2#3#4{%
    \if@compatibility
        \ifnum\mathgroup=\symbold
           \mathchoice{\mbox{\boldmath$\displaystyle\mathchar"#1#2#3#4$}}%
                      {\mbox{\boldmath$\textstyle\mathchar"#1#2#3#4$}}%
                      {\mbox{\boldmath$\scriptstyle\mathchar"#1#2#3#4$}}%
                      {\mbox{\boldmath$\scriptscriptstyle\mathchar"#1#2#3#4$}}%
        \else
           \mathchar"#1#2#3#4%
        \fi 
    \else 
        \FindBoldGroup
        \ifnum\mathgroup=\theboldgroup 
           \mathchoice{\mbox{\boldmath$\displaystyle\mathchar"#1#2#3#4$}}%
                      {\mbox{\boldmath$\textstyle\mathchar"#1#2#3#4$}}%
                      {\mbox{\boldmath$\scriptstyle\mathchar"#1#2#3#4$}}%
                      {\mbox{\boldmath$\scriptscriptstyle\mathchar"#1#2#3#4$}}%
        \else
           \mathchar"#1#2#3#4%
        \fi     	    
	  \fi}
\newif\ifGreekBold  \GreekBoldfalse
\let\SAVEPBF=\pbf
\def\pbf{\GreekBoldtrue\SAVEPBF}%
  \newcounter{equationnumber}  
  \def\mathletters{%
     \addtocounter{equation}{1}
     \edef\@currentlabel{\theequation}%
     \setcounter{equationnumber}{\c@equation}
     \setcounter{equation}{0}%
     \edef\theequation{\@currentlabel\noexpand\alph{equation}}%
  }
    \def\BibTeX{{\rm B\kern-.05em{\sc i\kern-.025em b}\kern-.08em
                 T\kern-.1667em\lower.7ex\hbox{E}\kern-.125emX}}}{}%
\def\AmS{{\protect\usefont{OMS}{cmsy}{m}{n}%
                A\kern-.1667em\lower.5ex\hbox{M}\kern-.125emS}}}{}%
\def\@@eqncr{\let\@tempa\relax
    \ifcase\@eqcnt \def\@tempa{& & &}\or \def\@tempa{& &}%
      \else \def\@tempa{&}\fi
     \@tempa
     \if@eqnsw
        \iftag@
           \@taggnum
        \else
           \@eqnnum\stepcounter{equation}%
        \fi
     \fi
     \global\tag@false
     \global\@eqnswtrue
     \global\@eqcnt\z@\cr}
\def\TCItag{\@ifnextchar*{\@TCItagstar}{\@TCItag}}
\def\@TCItag#1{%
    \global\tag@true
    \global\def\@taggnum{(#1)}}
\def\@TCItagstar*#1{%
    \global\tag@true
    \global\def\@taggnum{#1}}
\def\tsum{\mathop{\textstyle \sum }}%
\def\dsum{\mathop{\displaystyle \sum }}%
\def\ExitTCILatex{\makeatother }
\if@compatibility\message{amsmath already loaded}\fi\aftergroup\ExitTCILatex}
\if@compatibility\message{amstex already loaded}\fi\aftergroup\ExitTCILatex}
\if@compatibility\message{amsgen already loaded}\fi\aftergroup\ExitTCILatex}
\let\DOTSI\relax
\def\RIfM@{\relax\ifmmode}%
\def\FN@{\futurelet\next}%
\def\iint{\DOTSI\intno@\tw@\FN@\ints@}%
\def\iiint{\DOTSI\intno@\thr@@\FN@\ints@}%
\def\iiiint{\DOTSI\intno@4 \FN@\ints@}%
\def\idotsint{\DOTSI\intno@\z@\FN@\ints@}%
\def\ints@{\findlimits@\ints@@}%
\newif\iflimtoken@
\newif\iflimits@
\def\findlimits@{\limtoken@true\ifx\next\limits\limits@true
 \else\ifx\next\nolimits\limits@false\else
 \limtoken@false\ifx\ilimits@\nolimits\limits@false\else
 \ifinner\limits@false\else\limits@true\fi\fi\fi\fi}%
\def\multint@{\int\ifnum\intno@=\z@\intdots@                          
 \else\intkern@\fi                                                    
 \ifnum\intno@>\tw@\int\intkern@\fi                                   
 \ifnum\intno@>\thr@@\int\intkern@\fi                                 
 \int}
\def\multintlimits@{\intop\ifnum\intno@=\z@\intdots@\else\intkern@\fi
 \ifnum\intno@>\tw@\intop\intkern@\fi
 \ifnum\intno@>\thr@@\intop\intkern@\fi\intop}%
\def\intic@{%
    \mathchoice{\hskip.5em}{\hskip.4em}{\hskip.4em}{\hskip.4em}}%
\def\negintic@{\mathchoice
 {\hskip-.5em}{\hskip-.4em}{\hskip-.4em}{\hskip-.4em}}%
\def\ints@@{\iflimtoken@                                              
 \def\ints@@@{\iflimits@\negintic@
   \mathop{\intic@\multintlimits@}\limits                             
  \else\multint@\nolimits\fi                                          
  \eat@}
 \else                                                                
 \def\ints@@@{\iflimits@\negintic@
  \mathop{\intic@\multintlimits@}\limits\else
  \multint@\nolimits\fi}\fi\ints@@@}%
\def\intkern@{\mathchoice{\!\!\!}{\!\!}{\!\!}{\!\!}}%
\def\plaincdots@{\mathinner{\cdotp\cdotp\cdotp}}%
\def\intdots@{\mathchoice{\plaincdots@}%
 {{\cdotp}\mkern1.5mu{\cdotp}\mkern1.5mu{\cdotp}}%
 {{\cdotp}\mkern1mu{\cdotp}\mkern1mu{\cdotp}}%
 {{\cdotp}\mkern1mu{\cdotp}\mkern1mu{\cdotp}}}%
\def\RIfM@{\relax\protect\ifmmode}
\def\text{\RIfM@\expandafter\text@\else\expandafter\mbox\fi}
\let\nfss@text\text
\def\text@#1{\mathchoice
   {\textdef@\displaystyle\f@size{#1}}%
   {\textdef@\textstyle\tf@size{\firstchoice@false #1}}%
   {\textdef@\textstyle\sf@size{\firstchoice@false #1}}%
   {\textdef@\textstyle \ssf@size{\firstchoice@false #1}}%
   \glb@settings}
\def\textdef@#1#2#3{\hbox{{%
                    \everymath{#1}%
                    \let\f@size#2\selectfont
                    #3}}}
\newif\iffirstchoice@
\def\Let@{\relax\iffalse{\fi\let\\=\cr\iffalse}\fi}%
\def\vspace@{\def\vspace##1{\crcr\noalign{\vskip##1\relax}}}%
\def\multilimits@{\bgroup\vspace@\Let@
 \baselineskip\fontdimen10 \scriptfont\tw@
 \advance\baselineskip\fontdimen12 \scriptfont\tw@
 \lineskip\thr@@\fontdimen8 \scriptfont\thr@@
 \lineskiplimit\lineskip
 \vbox\bgroup\ialign\bgroup\hfil$\m@th\scriptstyle{##}$\hfil\crcr}%
\def\Sb{_\multilimits@}%
\def\endSb{\crcr\egroup\egroup\egroup}%
\def\Sp{^\multilimits@}%
\newdimen\ex@
\def\rightarrowfill@#1{$#1\m@th\mathord-\mkern-6mu\cleaders
 \hbox{$#1\mkern-2mu\mathord-\mkern-2mu$}\hfill
 \mkern-6mu\mathord\rightarrow$}%
\def\leftarrowfill@#1{$#1\m@th\mathord\leftarrow\mkern-6mu\cleaders
 \hbox{$#1\mkern-2mu\mathord-\mkern-2mu$}\hfill\mkern-6mu\mathord-$}%
\def\leftrightarrowfill@#1{$#1\m@th\mathord\leftarrow
\mkern-6mu\cleaders
 \hbox{$#1\mkern-2mu\mathord-\mkern-2mu$}\hfill
 \mkern-6mu\mathord\rightarrow$}%
\def\overrightarrow{\mathpalette\overrightarrow@}%
\def\overrightarrow@#1#2{\vbox{\ialign{##\crcr\rightarrowfill@#1\crcr
 \noalign{\kern-\ex@\nointerlineskip}$\m@th\hfil#1#2\hfil$\crcr}}}%
\def\overleftarrow{\mathpalette\overleftarrow@}%
\def\overleftarrow@#1#2{\vbox{\ialign{##\crcr\leftarrowfill@#1\crcr
 \noalign{\kern-\ex@\nointerlineskip}$\m@th\hfil#1#2\hfil$\crcr}}}%
\def\overleftrightarrow{\mathpalette\overleftrightarrow@}%
\def\overleftrightarrow@#1#2{\vbox{\ialign{##\crcr
   \leftrightarrowfill@#1\crcr
 \noalign{\kern-\ex@\nointerlineskip}$\m@th\hfil#1#2\hfil$\crcr}}}%
\def\underrightarrow{\mathpalette\underrightarrow@}%
\def\underrightarrow@#1#2{\vtop{\ialign{##\crcr$\m@th\hfil#1#2\hfil
  $\crcr\noalign{\nointerlineskip}\rightarrowfill@#1\crcr}}}%
\def\underleftarrow{\mathpalette\underleftarrow@}%
\def\underleftarrow@#1#2{\vtop{\ialign{##\crcr$\m@th\hfil#1#2\hfil
  $\crcr\noalign{\nointerlineskip}\leftarrowfill@#1\crcr}}}%
\def\underleftrightarrow{\mathpalette\underleftrightarrow@}%
\def\underleftrightarrow@#1#2{\vtop{\ialign{##\crcr$\m@th
  \hfil#1#2\hfil$\crcr
 \noalign{\nointerlineskip}\leftrightarrowfill@#1\crcr}}}%
\def\qopnamewl@#1{\mathop{\operator@font#1}\nlimits@}
\let\nlimits@\displaylimits
\def\setboxz@h{\setbox\z@\hbox}
\def\varlim@#1#2{\mathop{\vtop{\ialign{##\crcr
 \hfil$#1\m@th\operator@font lim$\hfil\crcr
 \noalign{\nointerlineskip}#2#1\crcr
 \noalign{\nointerlineskip\kern-\ex@}\crcr}}}}
 \def\rightarrowfill@#1{\m@th\setboxz@h{$#1-$}\ht\z@\z@
  $#1\copy\z@\mkern-6mu\cleaders
  \hbox{$#1\mkern-2mu\box\z@\mkern-2mu$}\hfill
  \mkern-6mu\mathord\rightarrow$}
\def\leftarrowfill@#1{\m@th\setboxz@h{$#1-$}\ht\z@\z@
  $#1\mathord\leftarrow\mkern-6mu\cleaders
  \hbox{$#1\mkern-2mu\copy\z@\mkern-2mu$}\hfill
  \mkern-6mu\box\z@$}
\def\projlim{\qopnamewl@{proj\,lim}}
\def\injlim{\qopnamewl@{inj\,lim}}
\def\varinjlim{\mathpalette\varlim@\rightarrowfill@}
\def\varprojlim{\mathpalette\varlim@\leftarrowfill@}
\def\varliminf{\mathpalette\varliminf@{}}
\def\varliminf@#1{\mathop{\underline{\vrule\@depth.2\ex@\@width\z@
   \hbox{$#1\m@th\operator@font lim$}}}}
\def\varlimsup{\mathpalette\varlimsup@{}}
\def\varlimsup@#1{\mathop{\overline
  {\hbox{$#1\m@th\operator@font lim$}}}}
\def\align{\@verbatim \frenchspacing\@vobeyspaces \@alignverbatim
You are using the "align" environment in a style in which it is not defined.}
\let\csname endalign*\endcsname =\endtrivlist
\def\alignat{\@verbatim \frenchspacing\@vobeyspaces \@alignatverbatim
You are using the "alignat" environment in a style in which it is not defined.}
\let\csname endalignat*\endcsname =\endtrivlist
\def\xalignat{\@verbatim \frenchspacing\@vobeyspaces \@xalignatverbatim
You are using the "xalignat" environment in a style in which it is not defined.}
\let\csname endxalignat*\endcsname =\endtrivlist
\def\gather{\@verbatim \frenchspacing\@vobeyspaces \@gatherverbatim
You are using the "gather" environment in a style in which it is not defined.}
\let\csname endgather*\endcsname =\endtrivlist
\def\multiline{\@verbatim \frenchspacing\@vobeyspaces \@multilineverbatim
You are using the "multiline" environment in a style in which it is not defined.}
\let\csname endmultiline*\endcsname =\endtrivlist
\def\arrax{\@verbatim \frenchspacing\@vobeyspaces \@arraxverbatim
You are using a type of "array" construct that is only allowed in AmS-LaTeX.}
\def\tabulax{\@verbatim \frenchspacing\@vobeyspaces \@tabulaxverbatim
You are using a type of "tabular" construct that is only allowed in AmS-LaTeX.}
\let\csname endarrax*\endcsname =\endtrivlist
\let\csname endtabulax*\endcsname =\endtrivlist
 \def\endequation{%
     \ifmmode\ifinner 
      \iftag@
        \addtocounter{equation}{-1} 
        $\hfil
           \displaywidth\linewidth\@taggnum\egroup \endtrivlist
        \global\tag@false
        \global\@ignoretrue   
      \else
        $\hfil
           \displaywidth\linewidth\@eqnnum\egroup \endtrivlist
        \global\tag@false
        \global\@ignoretrue 
      \fi
     \else   
      \iftag@
        \addtocounter{equation}{-1} 
        \eqno \hbox{\@taggnum}
        \global\tag@false%
        $$\global\@ignoretrue
      \else
        \eqno \hbox{\@eqnnum}
        $$\global\@ignoretrue
      \fi
     \fi\fi
 } 
 \newif\iftag@ \tag@false
 \def\TCItag{\@ifnextchar*{\@TCItagstar}{\@TCItag}}
 \def\@TCItag#1{%
     \global\tag@true
     \global\def\@taggnum{(#1)}}
 \def\@TCItagstar*#1{%
     \global\tag@true
     \global\def\@taggnum{#1}}
     \def\tag{\@ifnextchar*{\@tagstar}{\@tag}}
     \def\@tag#1{%
         \global\tag@true
         \global\def\@taggnum{(#1)}}
     \def\@tagstar*#1{%
         \global\tag@true
         \global\def\@taggnum{#1}}
\def\dfrac#1#2{{\displaystyle {#1 \over #2}}}%
\def\baselinestretch{1.30}
\def\b0{\mbox{\boldmath $0$}}
\def\bee{\mbox{\boldmath $e$}}
\def\by{\mbox{\boldmath $y$}}
\def\buno{\mbox{\boldmath $1$}}
\begin{document}

\title{\vspace{-1.7cm}{\Huge Distributed Consensus
over Wireless Sensor Networks Affected by Multipath
Fading\vspace{-0.2cm}}}
\author{Gesualdo Scutari and Sergio Barbarossa\\
{Dpt. INFOCOM, Univ. of Rome \textquotedblleft La Sapienza
\textquotedblright, Via Eudossiana 18, 00184 Rome, Italy}\\
{ E-mail: \texttt{$\{$scutari, sergio$\}$@infocom.uniroma1.it}}.
\thanks{This work has been partially funded by the WINSOC project, a Specific
Targeted Research Project (Contract Number 0033914) co-funded by the INFSO
DG of the European Commission within the RTD activities of the Thematic
Priority Information Society Technologies, and by ARL/ERO Contract
N62558-05-P-0458.}\\\\}
\date{\small Paper submitted to \textit{IEEE Transactions on Signal Processing}, August 2007.\\ Revised November 30, 2007. Accepted January 14, 2008.}
\maketitle
\vspace{-0.8cm}
\begin{abstract}
The design of sensor networks capable of reaching a consensus on a
globally optimal decision test, without the need for a fusion
center, is a problem that has received considerable attention in the
last years. Many consensus algorithms have been proposed, with
convergence conditions depending on the graph describing the
interaction among the nodes. In most works, the graph is undirected
and there are no propagation delays. Only recently, the analysis has
been extended to consensus algorithms incorporating propagation
delays. In this work, we propose a consensus algorithm able to
converge to a {\it globally optimal} decision statistic, using a
\emph{wideband} wireless network, governed by a fairly simple MAC
mechanism, where each link is a multipath, frequency-selective,
channel. The main contribution of the paper is to derive necessary
and sufficient conditions on the network topology and sufficient
conditions on the channel transfer functions guaranteeing the
exponential convergence of the consensus algorithm to a globally
optimal decision value, for \emph{any} bounded delay condition.
\end{abstract}

\section{Introduction}

Distributed algorithms for achieving consensus in wireless sensor
networks, without the need for a fusion center, have been the
subject of many recent works. Two excellent tutorials on the subject are \cite%
{Olfati-Saber-Murray-ProcIEEE07, Ren-Beard-Control-Magazine} (see
also references therein). The conditions for achieving a consensus
over a globally optimal decision test ultimately depend on the
properties of the graph modeling the interaction among the nodes. Most works consider
undirected graphs and neglect propagation delays. There are only a
few works that study the impact of delays in consensus-achieving
algorithms, namely \cite{Olfati-Saber}$-$\cite{Lee-Spong-06},
focusing on time-continuous systems, and
\cite{Tsitsiklis-Bertsekas-Athans}$-$\cite{Blondel-Tsitsiklis},
dealing with discrete-time systems. Among these works, it is useful
to distinguish between consensus algorithms, \cite%
{Olfati-Saber-Murray-ProcIEEE07}$-$\cite{Olfati-Saber}, where the states of
all the sensors converge to a prescribed function (typically the average) of
the sensors' initial values, and agreement algorithms, \cite{Strogatz}$-$%
\cite{Blondel-Tsitsiklis}, typically used for coordinating the
motion of sets of vehicles, where the states of the nodes converge
to a common value, but this value is not a specified function of the
initial values.  A recent work proposed a randomized gossip algorithm
\cite{boyd-gosh-prabhakar-shah} to achieve distributed consensus,
with a simple interaction mechanism, where each node interacts with
one node at the time, in a randomized fashion.

 In this work, we are interested in distributed
consensus algorithms where the consensus coincides with a globally
optimal decision statistic. Our goal is to derive the conditions on
the channels between each pair of nodes, guaranteeing that each
sensor will eventually converge to the globally optimal decision
statistic, in a totally distributed manner, i.e. without requiring
the presence of a fusion center. In \cite{Olfati-Saber-Murray-ProcIEEE07, Olfati-Saber}, the authors
provided necessary and sufficient conditions for the convergence of
a linear consensus protocol, in the case of a common time-invariant
delay value for all the links, i.e., $\tau _{ij}=\tau $ $\forall i\neq j$, and assuming \emph{%
symmetric} channels among the nodes (modeled as a undirected graph).
Under these assumptions, the average consensus in
\cite{Olfati-Saber-Murray-ProcIEEE07, Olfati-Saber} is reached if
and only if the common delay $\tau $ is smaller than a
topology-dependent value. However, the assumptions of homogeneous
delays and nonreciprocal channels are not appropriate for describing
the propagation in a common network deploying scenario, where the
delays depend on traveled distances and the communication channels
may be \emph{asymmetric}. In
\cite{Scutari-Barbarossa-Delay-SPAWC-07}, we generalized the
consensus algorithms to networks with inhomogeneous delays and
asymmetric flat-fading channels. In this correspondence, we extend
our previous work to the more general case where each link is
modeled as a multipath channel. We assume baseband communications,
motivated by the use of impulse radio technologies. The main
contributions of this paper are the following: i) We provide
necessary and sufficient conditions on the network topology and
sufficient conditions on the transfer function of each channel
ensuring global convergence to the optimal decision test, for
\textit{any} set of finite propagation delays; ii) We prove that the
convergence is exponential, with convergence rate depending, in
general, on the channel parameters and propagation delays;
iii) We show how to reach a distributed consensus coinciding with
the globally optimal decision statistics, achievable by a
centralized system having error-free access to all the nodes
measurements and observation parameters, without the need of
estimating neither the channel coefficients nor the delays.

\section{How to Achieve Consensus on a Globally Optimal Decision Test in a
Decentralized Way\label{System-model}}

Let us consider a set of $N$ sensors, each measuring a scalar parameter $y_i$%
, $i=1, \ldots, N$. The goal of the network is to compute a sufficient
statistic of the measured data expressible as
\begin{equation}
f(y_{1},y_{2},\ldots ,y_{N})=h\left[ \frac{\dsum_{i=1}^{N}c_{i}g_{i}(y_{i})}{%
\dsum_{i=1}^{N}c_{i}}\right], \vspace{-0.2cm}  \label{f}
\end{equation}
where $\left\{ c_{i}\right\} $ are positive coefficients and $\{g_{i}\}$ and
$h$ are arbitrary (possibly nonlinear) real functions on $\mathbb{R}$, i.e.,
$g_{i},h:%
\mathbb{R}
\mapsto
\mathbb{R}
$. Even though the class of functions expressible as in (\ref{f}) is
not the most general one, it does include many cases of practical
interest, like, e.g., best linear unbiased estimation or ML
estimation under linear signal models, multiple hypothesis testing,
detection of Gaussian processes in Gaussian noise, computation of
maximum, minimum, geometric mean or the histograms of the gathered
data \cite{Scutari-Barbarossa-Delay-SPAWC-07,
Barbarossa-Scutari-Magazine}. In this
paper, we consider only the scalar observation case, but the extension of (%
\ref{f}) to the vector case is straightforward, along the same guidelines of
\cite{Scutari-Barbarossa-Delay-SPAWC-07, Barbarossa-Scutari-Journal}.

To compute functions in the form (\ref{f}) in a distributed way, we
consider a linear interaction model among the nodes, and we
generalize the approach of \cite{Scutari-Barbarossa-Delay-SPAWC-07,
Barbarossa-Scutari-Journal} to a network where the channel between
each pair of nodes is a \emph{multipath} channel, with, in general,
asymmetric channel coefficients and geometry-dependent delays.
In each node there is a dynamical system whose state $x_{i}(t; \by)$
evolves according to the following linear differential equation
\begin{equation}
\begin{array}{l}
\dot{{x}}_{i}(t; \by)=g_{i}(y_{i})+\dfrac{K}{c_{i}}\dsum\limits_{j\in \mathcal{N}%
_{i}}\dsum\limits_{l=1}^{L}a_{ij}^{(l)}\,\left( x_{j}(t-\tau
_{ij}^{(l)}; \by)-x_{i}(t; \by)\right) ,\quad t>0, \\
x_{i}(\vartheta; \by)=\widetilde{\phi }_{i}(\vartheta ),\quad
\vartheta \in
\lbrack -\tau ,\ 0],%
\end{array}%
\quad i=1,\ldots ,N,  \label{linear delayed system}
\end{equation}
where $\mathbf{y}=\{y_i\}_{i=1}^{N}$ is the set of measurements; $g_{i}(y_{i})$ is a function of the local measurement, whose
form depends on the specific decision test; $c_{i}$ is a positive
coefficient that is chosen in order to achieve the desired
consensus, as in (\ref{f}); $K$ is a positive coefficient
controlling the convergence rate; $a_{ij}^{(l)}$ and $\tau
_{ij}^{(l)}$ are the amplitude and the delay associated to the
$l$-th path of the channel between nodes $i$ and $j$;
$\mathcal{N}_{i}=\{j=1,\ldots ,N: \exists \, a_{ij}^{(l)}\neq 0,\,
l=1, \ldots, L\}$ denotes the set of neighbors of node $i$, i.e.,
the nodes that send signals to node $i$. It is worth noticing that
the state function of, let us say, node $i$ depends, directly, only
on the measurement $y_i$ taken by the node itself and only
indirectly on the measurements gathered by the other nodes. In other
words, even though the state $x_i(t; \by)$ gets to depend,
eventually, on all the measurements, through the interaction with
the other nodes, each node needs to know only its own measurement.

The channel through which node $r$ receives the signal from node $q$
is a multipath channel with transfer function $H_{rq}(j\omega
)=\sum\nolimits_{l=1}^{L}a_{rq}^{(l)}e^{-j\omega \tau _{rq}^{(l)}},$
for all $r\neq q.$ We assume that the channel coefficients are
sufficiently slowly varying to be considered constant for the time
interval necessary for the network to converge, within a prescribed
accuracy. In Section 3, we will show that the convergence of
(\ref{linear delayed system}) is exponential and we will derive a
bound for the convergence rate. Knowing this rate, our method is
applicable for those channels whose coherence time is sufficiently
greater than the convergence time. We are interested in baseband
communications, motivated from a possible implementation of the
radio interface allowing for the interaction described by
(\ref{linear delayed system}) with an impulse radio using
pulse-position modulation (IR-PPM), where the position of the pulse
transmitted by node $i$ is proportional to the state of node
$x_i(t)$.  In general, we allow the channels to be asymmetric, i.e.,
$a_{rq}^{(l)}$ may be different from $a_{qr}^{(l)}$ (and thus
$H_{rq}(j\omega )\neq H_{qr}(j\omega )$)$.$ We also assume,
realistically, that the maximum delay is bounded, with maximum value
$\tau =\max_{r,q,l}\tau _{rq}^{(l)}.$ Because of the delays, the state evolution (\ref%
{linear delayed system}) for, let us say, $t>0$, is uniquely defined
provided that the initial state variables $x_{i}(t; \by)$ are
specified in the
interval from $-\tau $ to $0,$ i.e., $x_{i}(\vartheta; \by)=\widetilde{\phi }%
_{i}(\vartheta ),$ for all $i=1,\ldots ,N,$ and $\vartheta \in
\lbrack -\tau ,\ 0].$

Some important comments about the interaction mechanism (\ref{linear
delayed system}) are appropriate. Distributed consensus algorithms
have a clear advantage with respect to centralized systems, as they
are less prone to congestion events or failures of some of the
nodes. They are also inherently scalable. However, as opposed to
centralized systems, they typically require an iterative mechanism
to converge to the desired decision test. In most available works on
distributed consensus, it is tacitly assumed that each node is able
to receive the signals sent by its neighbors separately. This, of
course, requires a proper medium access control (MAC) mechanism to
avoid collisions. But, when combined with the iterative nature of
distributed consensus algorithms, a collision avoidance MAC protocol
may become rather complicated. Even the simple randomized gossip
algorithm of \cite{boyd-gosh-prabhakar-shah} requires some form of
MAC to avoid collisions. Unfortunately, enforcing a MAC control goes
against the requirement of simplicity and scalability, which are
some of the major motivations underlying the use of distributed
consensus algorithms. Conversely, we are interested in distributed
consensus mechanisms where all nodes transmit over a common shared
physical channel and there are no collision avoidance or resolution
mechanisms whatsoever, so that each node receives a linear
combination of the signals transmitted by the other nodes, possibly
through a multipath propagation channel. This motivates the
interaction model expressed by (\ref{linear delayed system}), from
which it turns out that each node does not need to resolve the
received signals to be able to update its own state function. In
this correspondence, we do not study the radio interface allowing
for the node interaction given by (\ref{linear delayed system}).
Nevertheless, some preliminary studies, see e.g.,
\cite{Barbarossa-Scutari-Magazine, Pescosolido-Barbarossa-icassp2008, Pescosolido-radar} suggest that impulse radios with
pulse position modulation or distributed phase-lock circuits are
possible candidates for implementing (\ref{linear delayed system}),
where the state values are exchanged through pulse position
modulation or phase modulation, respectively.

However, the  advantages of distributed consensus based algorithms as described above come at the price of a penalty:  the final consensus 
 is reached through an iterative procedure that consumes
time and energy.   The overall energy necessary
to achieve the final decision is the sum of the powers transmitted
by each sensor multiplied by the convergence time (in the next section, we will give an upper bound of such a value). On one hand, to save energy, we would like
to use the minimum transmit power that ensures network connectivity.
But a small transmit power has an effect on the network
topology, as it leads to a reduced number of links and,
as a consequence, to a small algebraic connectivity.
Hence, a small individual transmit power implies a long convergence
time. Conversely, to reduce the convergence time,
the network should have a high connectivity, but this requires
a large transmit power. It is then intuitive to expect an optimal
trade-off. This trade-off has been  studied in  \cite{Barbarossa-energy}, where we remand to the interested reader.
The focus of this
correspondence is on finding the conditions on the channel
parameters that guarantee the convergence of (\ref{linear delayed
system}) to the desired consensus value.

\bigskip

\noindent \textbf{Consensus on the state derivative}.
Differently from most papers dealing with average consensus problems \cite%
{Olfati-Saber-Murray-ProcIEEE07}$-$\cite{Olfati-Saber}, \cite{Strogatz}$-$%
\cite{Blondel-Tsitsiklis}, we adopt here the alternative definition of
consensus already introduced in our previous works \cite%
{Scutari-Barbarossa-Delay-SPAWC-07}$-$\cite{Barbarossa-Scutari-Magazine}:
We define the consensus (or network synchronization) with respect to
the state \textit{derivative}, rather than to the state.
\begin{definition}
\label{Definition_sync-state}Given the dynamical system in
(\ref{linear delayed system}), we say that a solution
$\{{x}_{i}^{\star }(t; \by)\}$ of (\ref{linear delayed system}) is a
\emph{synchronized state} of the system, if $\dot{{x}}_{i}^{\star
}(t; \by)={\alpha }^{\star }(\by),$ $\forall i=1,2,\ldots ,N$. The
system (\ref{linear delayed system}) is said to \emph{globally}
synchronize if there exists a synchronized state ${\alpha }^{\star
}(\by)$, and \emph{all} the state derivatives asymptotically
converge to this common
value, for \emph{any} given set of initial conditions $\{\widetilde{{\phi }}%
_{i}\},$ i.e., $\lim_{t\mapsto \infty }| \dot{{x}}_{i}(t;
\by)-{\alpha }^{\star
}(\by)| =0,$ $\forall i=1,2,\ldots ,N$, where $\{{x}_{i}(t; \by)\}$ is a solution to (%
\ref{linear delayed system}). The synchronized state is said to be \emph{%
globally asymptotically stable} if the system globally synchronizes,
in the sense specified above.
\end{definition}
\noindent Observe that, according to Definition
\ref{Definition_sync-state}, if there exists a globally
asymptotically stable synchronized state, then it must necessarily
be \emph{unique }(in the derivative).  One of the reasons to
introduce this definition of consensus, as opposed to the consensus on the state \cite%
{Olfati-Saber-Murray-ProcIEEE07}$-$\cite{Olfati-Saber}, \cite{Strogatz}$-$%
\cite{Blondel-Tsitsiklis}, is that, as will be shown in the next
section, the convergence on the state derivative is not affected by
the presence of propagation delays. One more reason is that, in the
presence of coupling noise, state-convergent algorithms give rise to
a noise with diverging variance , whereas the algorithm converging
on the state derivative exhibits a finite variance
\cite{Barbarossa-Scutari-Journal, Barbarossa-Scutari-Magazine}.

\section{Necessary and Sufficient Conditions for Achieving Consensus}

To derive our main results, we rely on some basic notions of
directed graph (digraph) theory, as briefly recalled next. More details are given in \cite[%
Appendix A]{Scutari-Barbarossa-Delay-SPAWC-07}. A digraph $\mathscr{G}$ is
defined as ${\mathscr{G}=}\{{\mathscr{V},\mathscr{E}}\}$, where $\mathscr{V}$
is the set of vertices and $\mathscr{E}\subseteq \mathscr{V}\times %
\mathscr{V}$ is the set of edges, with the convention that $%
e_{ij}=(v_{i},v_{j})\in \mathscr{E}$ if there exists an edge from $v_{j}$ to
$v_{i}$, i.e., the information flows from $v_{j}$ to $v_{i}$. A digraph is
weighted if a positive weight, denoted by $a_{ij}$, is associated with each
edge $e_{ij}$. The in-degree of a vertex is defined as the sum of the
weights of all its incoming edges. The out-degree is similarly defined.
The Laplacian matrix $\mathbf{L=L(%
}\mathscr{G}\mathbf{)}$ of the digraph associated to system
(\ref{linear delayed system}) is $\mathbf{L}=\mathbf{D}-\mathbf{A}$,
where $\mathbf{D}$ is the diagonal matrix of vertex in-degrees and
$\mathbf{A}$ is the adjacency matrix. For reasons that will be
clarified within the proof of next theorem, the above matrices are
built as follows:  $[\mathbf{D}
]_{ii}=\sum\nolimits_{j\in \mathcal{N}_{i}}\sum%
\nolimits_{l=1}^{L}a_{ij}^{(l)}$ and
$[\mathbf{A}]_{ij}=\sum\nolimits_{l=1}^{L}a_{ij}^{(l)}$. A digraph
is a directed tree if it has $N$ vertices and $N-1$ edges and there
exists a root vertex (i.e., a zero in-degree vertex) with directed
paths to all other vertices. A directed tree is a \emph{spanning}
directed tree of a digraph $\mathscr{G}$ if it has the same vertices
of $\mathscr{G}$.
A digraph is \emph{Strongly} Connected (SC) if, for every pair of nodes $v_i$ and $v_j$ , there exists a
directed path from $v_i$ to $v_j$ and viceversa. A digraph is \emph{Quasi-Strongly} Connected (QSC) if, for every
pair of nodes $v_i$ and $v_j$ , there exists a node $r$ that can reach both $v_i$ and $v_j$ by a directed path.
The fundamental result of this paper is stated in the following.
\begin{theorem}\label{Theorem_delay-linear_stability}
Let $\mathbf{L}$ be the Laplacian matrix associated to the digraph ${\mathscr{G}=}\{{%
\mathscr{V}%
,%
\mathscr{E}}\}$ of system (\ref{linear delayed system}), and let %
$\mathbf{\gamma }=[\gamma
_{1},\ldots ,\gamma _{N}]^{T}$ be the left eigenvector of $\,\mathbf{L}$ corresponding to the zero eigenvalue, i.e., $\mathbf{\gamma }%
^{T}\mathbf{L}=\mathbf{0}_{N}^{T}$. Given system (\ref{linear delayed system}), assume that the
following conditions are satisfied:

\noindent \textbf{a1)} The coupling gain
$K$ and the coefficients $\left\{ c_{i}\right\}$ are positive;

\noindent \textbf{a2)} The propagation delays $\{\tau _{ij}^{(l)}\}$ are finite, the coefficients $\{a_{ij}^{(l)}\}$ are real and the
channel transfer functions $\{H_{rq}(j\omega)\}$ are such that
\begin{equation}
H_{rq}(0)> 0, \quad \forall q,r\neq q,\quad \text{and}\quad \frac{\tsum\nolimits_{q\in \mathcal{N}_{r}}\left\vert
H_{rq}(j\omega )\right\vert }{\tsum\nolimits_{q\in \mathcal{N}_{r}}H_{rq}(0)}%
\leq 1,\quad \forall \omega \in \mathbb{R}, \forall r\neq q;\label{constr_channels}
\end{equation}%

\noindent \textbf{a3)}  The initial conditions are taken in the set of continuously differentiable and bounded
functions mapping the interval $[-\tau ,\ 0]$ to $\mathbb{R} ^{N}$.

Then, system (\ref{linear delayed system}) globally synchronizes,
for \emph{any} set of propagation delays, if and only if the digraph
${\mathscr{G}}$ is QSC. The synchronized state is
\begin{equation}
\alpha^{\star }(\by)=\frac{\dsum_{i=1}^N\gamma_{i}c_{i}g_i(y_i)}
{\dsum_{i=1}^N\gamma_{i}c_{i}+K\dsum_{i=1}^N\gamma_i\dsum_{j\in
\mathcal{N}_{i}}\dsum_{l=1}^L a_{ij}^{(l)}\tau_{ij}^{(l)}},
\label{bias_Theo}
\end{equation}%
where $\gamma _{i}>0$ if and only if node $i$ can reach all the
other nodes of the digraph by a directed path, otherwise $\gamma
_{i}=0$. The convergence is exponential, with asymptotic convergence rate  arbitrarily close
to $r\triangleq -\min_i\{\left \vert\limfunc{Re}\{s_{i}\}\right
\vert: p(s_{i})=0\,\, \text{and}\,\, s_i\neq 0\}$, where $p(s)$ is
the characteristic function associated to system (\ref{linear
delayed system}) (see (\ref{def_characteristic-function}) in the
Appendix).\vspace{-0.2cm}
\end{theorem}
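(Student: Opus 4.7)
The plan is to Laplace-transform (\ref{linear delayed system}) and study the roots of the resulting characteristic function. Let $\mathbf{C}=\operatorname{diag}(c_{1},\ldots,c_{N})$ and introduce the delay-dependent matrix $\widetilde{\mathbf{L}}(s)$ with $[\widetilde{\mathbf{L}}(s)]_{ii}=\sum_{j\in\mathcal{N}_{i}}H_{ij}(0)$ and $[\widetilde{\mathbf{L}}(s)]_{ij}=-H_{ij}(s)$ for $i\neq j$; then the characteristic function of (\ref{linear delayed system}) is $p(s)=\det\bigl(s\mathbf{I}+K\mathbf{C}^{-1}\widetilde{\mathbf{L}}(s)\bigr)$. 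Since $\widetilde{\mathbf{L}}(0)=\mathbf{L}$, the Laplacian of $\mathscr{G}$, $s=0$ is always a root. The strategy is to prove (i) $s=0$ is a simple root iff $\mathscr{G}$ is QSC, (ii) under (a2) every other root lies strictly in the open left half-plane, and (iii) the surviving zero-mode delivers exactly the value (\ref{bias_Theo}).

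For step (i), I would invoke the standard Perron--Frobenius analysis of digraph Laplacians: $\dim\ker\mathbf{L}^{T}$ equals the number of closed strongly connected components (sinks) of the condensation DAG of $\mathscr{G}$, which is one iff $\mathscr{G}$ is QSC; the left null vector $\boldsymbol{\gamma}$ is then nonnegative with $\gamma_{i}>0$ precisely on the unique sink, i.e., on those nodes reaching every other vertex by a directed path. If QSC fails, two independent left null vectors allow initial data whose derivatives settle on distinct sinks, ruling out global synchronization and giving necessity.

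Step (ii) is the main difficulty. Suppose $p(s)=0$ for some $s=\sigma+j\omega$ with $\sigma\geq 0$ and $s\neq 0$, and let $\mathbf{x}\neq \mathbf{0}$ lie in the corresponding kernel. Picking $i^{\star}\in\arg\max_{i}|x_{i}|$ and taking absolute values in the $i^{\star}$-th row of $(s\mathbf{I}+K\mathbf{C}^{-1}\widetilde{\mathbf{L}}(s))\mathbf{x}=\mathbf{0}$ gives
\begin{equation}
\Bigl|s+\tfrac{K}{c_{i^{\star}}}\tsum_{j\in\mathcal{N}_{i^{\star}}}H_{i^{\star}j}(0)\Bigr|\,|x_{i^{\star}}|\leq \tfrac{K}{c_{i^{\star}}}\Bigl(\tsum_{j\in\mathcal{N}_{i^{\star}}}|H_{i^{\star}j}(s)|\Bigr)|x_{i^{\star}}|.
\end{equation}
For $\sigma>0$, the extra damping $e^{-\sigma\tau_{ij}^{(l)}}$ inside each $|H_{ij}(s)|$ combines with (\ref{constr_channels}) to produce a strict Gershgorin-type disk around $-\tfrac{K}{c_{i^{\star}}}\sum_{j}H_{i^{\star}j}(0)$ that excludes every $s$ with nonnegative real part. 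The delicate sub-case is $\sigma=0$, $\omega\neq 0$: here (\ref{constr_channels}) gives only a weak inequality, so equality must hold everywhere, forcing the complex numbers $H_{i^{\star}j}(j\omega)x_{j}$ to be codirectional with $(s+\tfrac{K}{c_{i^{\star}}}\sum_{j}H_{i^{\star}j}(0))x_{i^{\star}}$; inductively propagating this phase-alignment along directed paths out of $i^{\star}$, and using QSC to reach all of $\mathscr{V}$, one is led to a closed-loop identity of the form $e^{-j\omega T}=1$ on a directed cycle together with $|x_{i}|=|x_{i^{\star}}|$ for every $i$, which collapses to $s=0$, a contradiction. Exponentiality of the convergence, with rate $r$ arbitrarily close to $-\max\{\operatorname{Re} s:p(s)=0,\;s\neq 0\}$, is immediate from the usual spectral decomposition of linear delay systems.

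Once (i) and (ii) are in place, only the simple zero mode persists and $\dot{x}_{i}(t)\to\alpha^{\star}$ exponentially for some common $\alpha^{\star}$. To identify $\alpha^{\star}$, substitute the asymptotic ansatz $x_{i}(t)\sim\alpha^{\star}t+\beta_{i}$ into (\ref{linear delayed system}), so that $x_{j}(t-\tau_{ij}^{(l)})-x_{i}(t)\to(\beta_{j}-\beta_{i})-\alpha^{\star}\tau_{ij}^{(l)}$; multiplying the $i$-th equation by $\gamma_{i}$, summing over $i$, and using $\boldsymbol{\gamma}^{T}\mathbf{L}=\mathbf{0}_{N}^{T}$ to cancel the $(\beta_{j}-\beta_{i})$ terms yields
\begin{equation}
\Bigl(\tsum_{i}\gamma_{i}c_{i}+K\tsum_{i}\gamma_{i}\tsum_{j\in\mathcal{N}_{i}}\tsum_{l=1}^{L}a_{ij}^{(l)}\tau_{ij}^{(l)}\Bigr)\alpha^{\star}=\tsum_{i}\gamma_{i}c_{i}g_{i}(y_{i}),
\end{equation}
which is exactly (\ref{bias_Theo}). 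The principal obstacle is step (ii), specifically excluding purely imaginary non-zero roots: the constraint (\ref{constr_channels}) gives only a weak frequency-domain bound on the axis, so the QSC structure of $\mathscr{G}$ must be exploited through the propagation-of-equalities argument sketched above.
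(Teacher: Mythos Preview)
Your outline matches the paper's in structure (existence of the affine synchronized state, location of characteristic roots, identification of $\alpha^{\star}$ via $\boldsymbol{\gamma}^{T}\mathbf{L}=\mathbf{0}$), but your treatment of step (ii) diverges from the paper and contains a gap.

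For $\sigma>0$, your appeal to ``extra damping $e^{-\sigma\tau_{ij}^{(l)}}$'' does not deliver $\sum_{j}|H_{i^{\star}j}(\sigma+j\omega)|\leq\sum_{j}H_{i^{\star}j}(0)$. Assumption \textbf{a2)} allows the $a_{ij}^{(l)}$ to have mixed signs, and in that case shrinking the terms by unequal factors $e^{-\sigma\tau_{ij}^{(l)}}$ can \emph{increase} $|H_{ij}(s)|$ (take $a^{(1)}=2$, $a^{(2)}=-1$, $\tau^{(1)}=0$, $\tau^{(2)}>0$: then $H(0)=1$ but $H(\sigma)\to 2$). What is needed is that $\sum_{j}|H_{i^{\star}j}(s)|$, being a sum of moduli of bounded analytic functions, is subharmonic on $\overline{\mathbb{C}}_{+}$ and hence attains its supremum on the imaginary axis, where \textbf{a2)} applies. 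The paper invokes exactly this principle, but at the level of the spectral radius: it rewrites the root condition as $\rho\bigl((s\mathbf{I}+\boldsymbol{\Delta})^{-1}\mathbf{H}(s)\bigr)<1$, cites the Boyd--Desoer result that $\rho(\cdot)$ of an $\mathcal{H}^{N\times N}$ matrix is subharmonic on $\overline{\mathbb{C}}_{+}$, and thereby reduces everything to $s=j\omega$ in one stroke.

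On the imaginary axis you over-complicate matters. The inequality you write, $\bigl|j\omega+\tfrac{K}{c_{i^{\star}}}\sum_{j}H_{i^{\star}j}(0)\bigr|\leq\tfrac{K}{c_{i^{\star}}}\sum_{j}|H_{i^{\star}j}(j\omega)|$, already yields an immediate contradiction for $\omega\neq 0$: the left side equals $\sqrt{\omega^{2}+D_{i^{\star}}^{2}}>D_{i^{\star}}$ with $D_{i^{\star}}=\tfrac{K}{c_{i^{\star}}}\sum_{j}H_{i^{\star}j}(0)>0$, while the right side is $\leq D_{i^{\star}}$ by \textbf{a2)}. This is precisely the strictness the paper exploits (``in the second inequality, the equality is reached if and only if $\omega=0$''). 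No propagation-of-equalities argument is needed; the ``principal obstacle'' you flag is not one. Incidentally, your sketched propagation runs along outgoing paths from $i^{\star}$, but the equality case in your Gershgorin row only constrains the \emph{incoming} neighbors $j\in\mathcal{N}_{i^{\star}}$, so the induction would have to go backward along edges, where QSC does not immediately help.
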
\begin{proof}
See the Appendix.\vspace{0.1cm}
\end{proof}

\noindent \textbf{Remark 1 - Robustness against multipath channels:
} Theorem \ref{Theorem_delay-linear_stability} shows that,
differently from classical linear consensus protocols
\cite{Olfati-Saber-Murray-ProcIEEE07, Ren-Beard-Control-Magazine},
the proposed algorithm is robust against propagation delays, since
its convergence condition \emph{is not affected by the delays}.
Moreover, the proposed approach is valid for
\emph{frequency-selective and asymmetric} channels. The only
significant constraint is that the channel coefficients are real and
that, according to \textbf{a2)}, their summation, over each channel,
has to be a positive quantity. This implies a sort of implicit
coherent combination, conceptually similar to the type-based
approach of \cite{Mergen-Tong}, even though the work of
\cite{Mergen-Tong} was aimed at studying the multiple access for
sensor networks with a fusion center, whereas our scheme does not
need a fusion center. The additional constraints on the channel
transfer functions, i.e., ${\tsum\nolimits_{q\in
\mathcal{N}_{r}}\left\vert H_{rq}(j\omega )\right\vert }\le
{\tsum\nolimits_{q\in \mathcal{N}_{r}}H_{rq}(0)}$, for all $\omega
\in \mathbb{R}$, is certainly valid if the channels are low-pass
filters with maximum gain in $\omega =0$. But this is only a
sufficient condition and we will later report some numerical results
showing that if this condition is not satisfied, the method can
still converge. Finally, given the convergence rate $r$, we know for
which class of channels the method is applicable: the channels whose
coherence time is sufficiently greater than $1/r$.

\noindent \textbf{Remark 2 - Effect of network topology:} According
to Theorem \ref{Theorem_delay-linear_stability}, a global consensus
is possible \textit{if and only if} there exists at least one node
(the root node of the spanning directed tree of the digraph) that
can reach all the other nodes by a directed path. If no such a node
exists, the information gathered by each sensor has no way to
propagate through the \emph{whole} network and thus a global
consensus cannot be reached. Moreover, the only nodes contributing
to the final consensus value are the ones having a directed path
linking them to all the other nodes [see (\ref{bias_Theo})]. As a
consequence, the final decision depends on the measurements gathered
by {\it all} the nodes if and only
if the network is strongly connected. When the digraph is not QSC, system (%
\ref{linear delayed system}) may still converge, but it forms separated
clusters of consensus, as proved in \cite{Scutari-Barbarossa-Delay-SPAWC-07,
Barbarossa-Scutari-Magazine}. \medskip

\noindent\textbf{Remark 3 - Unbiased decisions without estimating
the channel parameters: } The closed form expression of the
synchronized state given in (\ref{bias_Theo}) shows a dependence of
the final consensus on the network topology and propagation
parameters. This implies that the final consensus value
(\ref{bias_Theo}), in general, does not coincide with the desired
decision statistics as given in (\ref{f}), except for the trivial
case of flat-fading channels with zero delays and balanced (and thus
strongly connected) digraph. Nevertheless, in the following we
provide a method to get an {\it unbiased} estimate, without having
to get any preliminary estimation of the channel parameters, i.e.
$a_{i, j}^{(l)}, \tau_{i, j}^{(l)}$, incorporating, only in the case
of unbalanced networks, a decentralized estimation of the topology
dependent coefficients $\gamma_i$.

The bias due to the propagation delays and path amplitudes can be
removed using the following two-step algorithm. We let system
(\ref{linear delayed system}) to evolve twice: The first time, the
system evolves according to (\ref{linear delayed system}) and we
denote by $\alpha^\star(\by)$
the synchronized state, as in (\ref{bias_Theo}); the second time, we set $%
g_i(y_i)=1$ in (\ref{linear delayed system}), for all $i$, and the
system is
let to evolve again, denoting the new synchronized state by $\alpha^\star(%
\mbox{\boldmath $1$})$. Taking the ratio
$\alpha^\star(\by)/\alpha^\star(\buno) =(\sum_{i=1}^{N}\gamma
_{i}c_{i} g_{i}(y_i))/(\sum_{i=1}^{N}\gamma
_{i}c_{i}),$ 
we obtain the same consensus value that would have been achieved in
the absence of multipath propagation.

If the network is strongly connected and balanced, $\gamma_i=1,
\forall i$ and then the compensated consensus coincides with the desired value (%
\ref{f}). If the network is unbalanced, the compensated consensus
$\alpha^\star(\by)/\alpha^\star(\buno)$ does not depend on the
multipath coefficients, but it is still biased, with a bias
dependent on $\mathbf{\gamma}$, i.e., on the network topology. This
residual dependence can be eliminated in a decentralized way if
each node is able to estimate its own $\gamma_i$. In fact, in such a case, $\alpha^\star(%
\mbox{\boldmath $y$})/\alpha^\star(\mbox{\boldmath $1$})$ can be
made to coincide with the desired expression in (\ref{f}) by simply
replacing each $c_i$ in (\ref{linear delayed system}) with
$c_i/\gamma_i$, for all $i$ such that $\gamma_i \neq 0$ (suppose that there are $N_r$ of such nodes, w.l.o.g.).
Interestingly, the estimate of each $\gamma_i$ can also be obtained
in a decentralized way, using the following  procedure.
At the beginning, every node sets $g_i(y_i)=1$ and the
network is let to evolve. The final consensus value will be,
in this case $\alpha^\star(\buno)$.  Then, the network is let to evolve $N_r$ times, according to the following protocol. At
step $i$, with $i=1,\ldots N_r,$ node $i$ sets $g_i(y_i)=1$, while all the other nodes set $%
g_k(y_k)=0$ for all $k\neq i$; all nodes are then let to evolve according to (\ref%
{linear delayed system}); let us denote by $\alpha^\star(\bee_i)$
the final consensus value, where $\bee_i$ is the canonical vector
having all zeros, except the $i$-th component, equal to one. 
Each node is now able to take
the ratio $\alpha^\star(\bee_i)/\alpha^\star(\buno)$, which
coincides with the ratio $\tilde{\gamma}_i:=\gamma_i/\sum_k
\gamma_k$. Thus, after $N_r+1$ steps, every node knows its own
(normalized) $\tilde{\gamma}_i$ and it may then use it in the
subsequent run of the consensus algorithm, setting
$c_i=c_/\tilde{\gamma}_i$, to achieve a topology independent
estimate. Observe that, since the eigenvector $\mathbf{\gamma}$ does
not depend on
the observations $\{y_i\}$, the proposed algorithm to estimate $\mathbf{%
\gamma}$ is required to be performed only once every channel
coherence period. In summary, the effects of both delays and channel
coefficients can be eliminated from the final consensus value, even
if at the price of a slight increase of complexity and the need for some
coordination among the nodes.

\section{Numerical Results and Conclusion}
As a numerical example, in the top row of Fig. \ref{Figure-RSCC}, we
report two examples of topologies: the left graph is SC, whereas the
right graph is QSC. For the QSC digraph in the figure, we sketch  its decomposition in Strongly Connected Components (SCC), whose root is denoted by RSCC.\footnote{A SCC of a digraph is a maximal subgraph  which is also SC, meaning that there is no larger SC subgraph containing the nodes of the considered component. A RSCC is a SCC containing all nodes that can reach all the other nodes in the digraph by a directed path \cite[Appendix A]{Scutari-Barbarossa-Delay-SPAWC-07}.} The behavior of the state derivatives versus the
iteration index, in the two cases,  is illustrated on the bottom row
of the figure\footnote{Clearly, the simulations have been performed
on the discretized version of (\ref{linear delayed system}); in such
a case, given the sampling time $T$, there is a maximum value of $K$
guaranteeing the convergence of (\ref{linear delayed system}): $K T$
must be sufficiently smaller than $1/\limfunc{deg}_{\limfunc{in}}^{\max}$, where
$\limfunc{deg}_{\limfunc{in}}^{\max}$ is the maximum in-degree of the graph Laplacian
(cf. \cite[Appendix A]{Scutari-Barbarossa-Delay-SPAWC-07}).}. The edges shown in both graphs show the
active link. Each link is modeled as an FIR filter, modeling the
multipath fading. Each filter has maximum length $L=5$ and the
coefficients have been generated as
$a_{ij}^{(l)}=(A+w(i,j,l))\,e^{-l T/\tau_0}$, $l=0, \ldots, L-1$,
where the constant $A=1$ represents a deterministic component,
whereas $w(i,j,l)$ are i.i.d. random Gaussian variables with zero
mean and standard deviation $\sigma_n=0.5$, modeling the fading.
Observe that, using this setting, some channel coefficients are also
negative. The exponential models the attenuation as a function of
distance and $\tau_0$ represents the delay spread; $T$ is the
sampling time. The delays $\tau_{ij}^{(l)}$ on each link have been
modeled as $\tau_{ij}^{(l)}=d_{ij}/c+(l-1)T$, where $d_{ij}$ is the
distance between nodes $i$ and $j$ and $c$ is the speed of light.
The dimension of the network has been computed in order to make the
maximum delay $\tau_{max}=d_{max}/c$ much larger than the sampling
time $T$. In particular, we chose the parameters so that
$\tau_{max}=30\, T$, in order to test the algorithm under a severe
propagation delay. The constant lines with arrows reported in the
bottom row of Fig. \ref{Figure-RSCC}  represent the theoretical
value, as given by (\ref{bias_Theo}). We can verify that the
simulation curves tend to approach the theoretical values for both
SC and QSC topologies, as predicted by the theory. It is worth
mentioning that, in both cases, we used channels that respect the
condition $H_{rq}(0)>0, \forall r, q$, but do not necessarily
respect the condition ${\tsum\nolimits_{q\in
\mathcal{N}_{r}}\left\vert H_{rq}(j\omega )\right\vert }\le
{\tsum\nolimits_{q\in \mathcal{N}_{r}}H_{rq}(0)}$. Nonetheless,
the simulation results are still in good agreement with our
theoretical findings. There is no contrast with the theory because
the condition ${\tsum\nolimits_{q\in
\mathcal{N}_{r}}\left\vert H_{rq}(j\omega )\right\vert }\le
{\tsum\nolimits_{q\in \mathcal{N}_{r}}H_{rq}(0)}$, $\forall r\neq q,$ is only a sufficient
condition.

\begin{figure}[tbh]
\begin{center}
\includegraphics[height=12 cm]{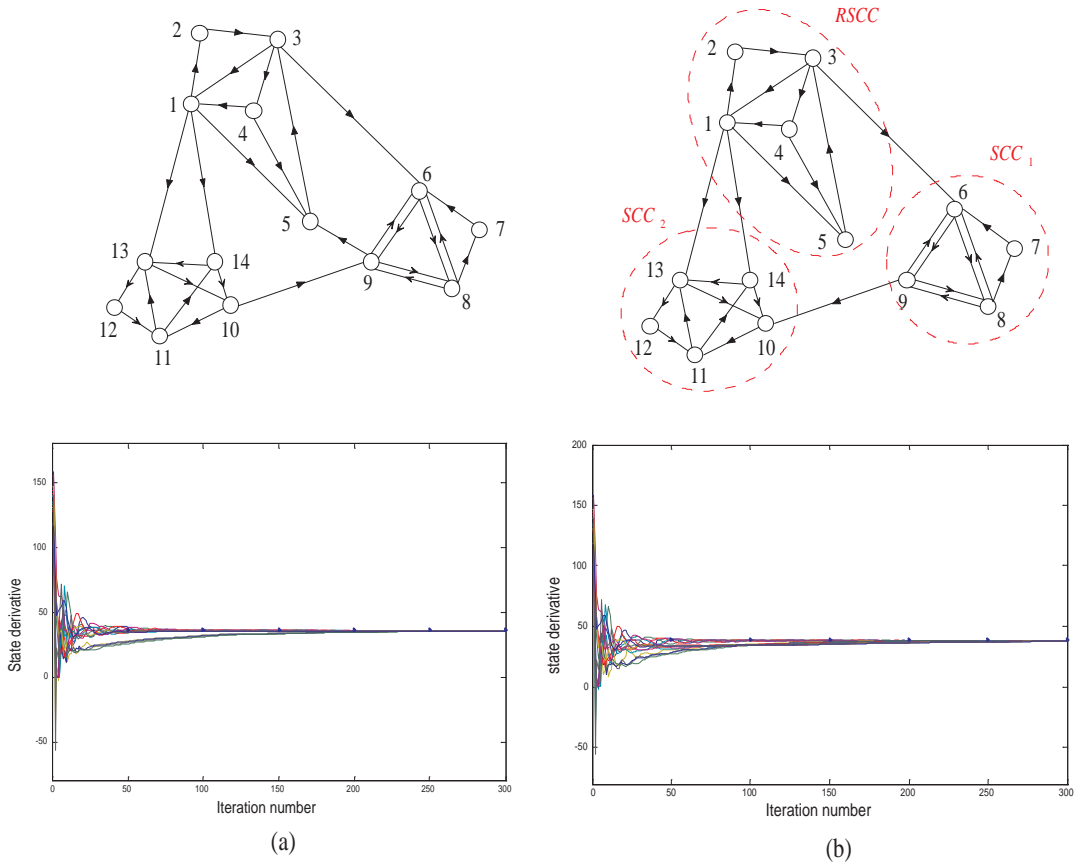}
\end{center}\vspace{-0.9cm}
\caption{{\protect\footnotesize Example of global
consensus in a SC and QSC network.}} \label{Figure-RSCC}\vspace{-0.2cm}
\end{figure}

The estimates reported in Fig. \ref{Figure-RSCC} show a good
agreement between theory and simulation, but the final result does
not coincide with the theoretical optimal value, because of the bias
induced by the multipath coefficients and delays. However, as
suggested at the end of the previous section, it is possible to
remove the bias, without having to estimate neither the channel
amplitudes $a_{i, j}^{(k)}$ nor the delays $\tau_{i, j}^{(k)}$. As
an example of this compensation technique, in Fig.
\ref{Figure-compensated}, we report the running state derivative
$\dot{{x}}_{i}(t; \by)$ (solid lines) and the compensated estimate
$\dot{{x}}_{i}(t; \by)/\dot{{x}}_{i}(t; \buno)$ (dotted line),
together with the theoretical limits (constant lines) achievable
without compensation (triangle marks) and with compensation (circle
marks). The last value coincides with the globally optimal estimate.
The results shown in Fig. \ref{Figure-compensated} have been
achieved with multipath fading channels of length $L=11$, under the
same fading model used in the previous example. Fig.
\ref{Figure-compensated} shows that, as predicted by the theory, the
consensus algorithm with compensation is able to reach the globally
optimal estimate, without the need of estimating the channel
coefficients.\vspace{-0.2cm}

\begin{figure}[tbh]
\begin{center}
\includegraphics[height=6 cm]{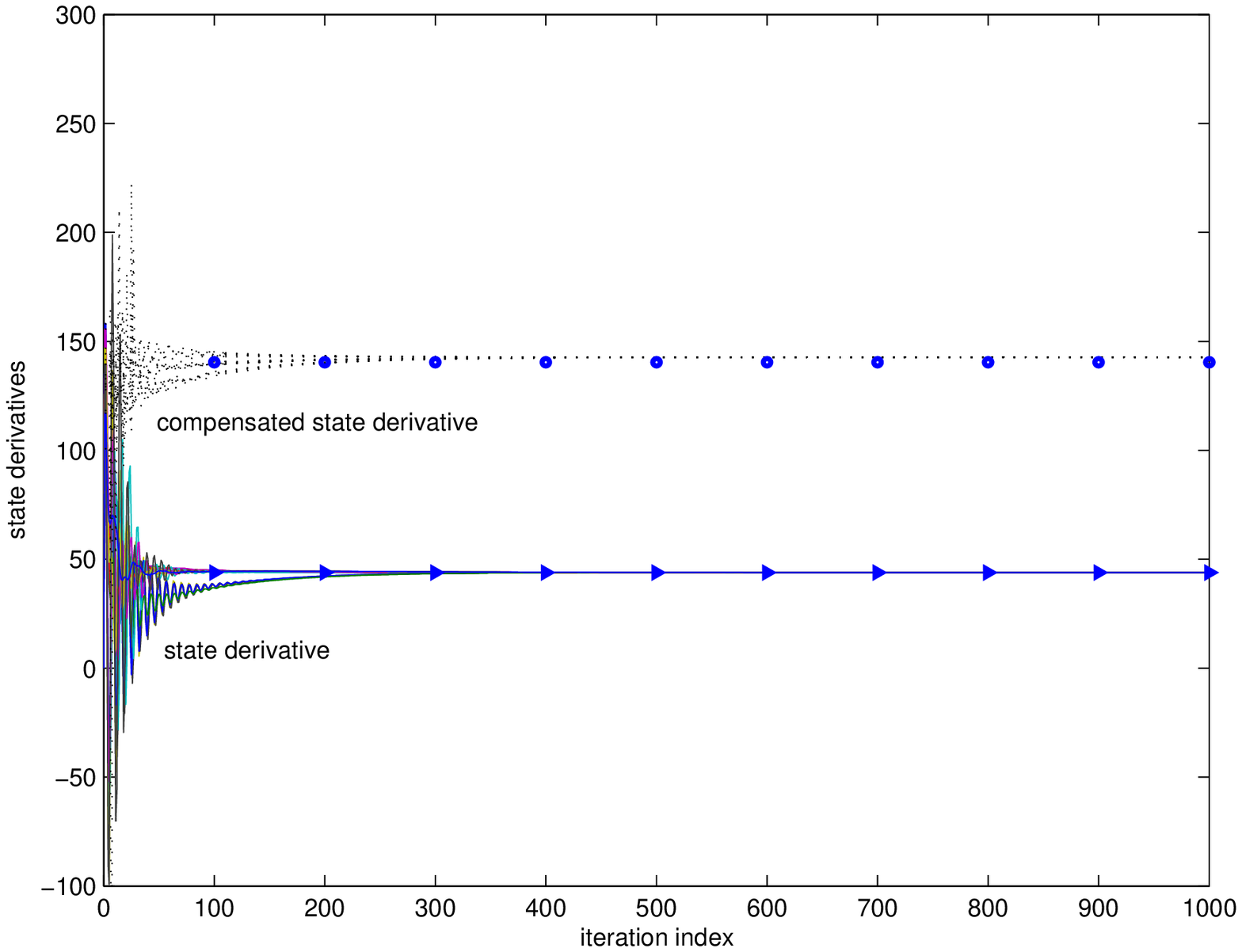}
\end{center}
\vspace{-0.9cm} \caption{{\protect\footnotesize Uncompensated running
estimate (solid lines), i.e. $\dot{{x}}_{i}(t; \by)$ and compensated
running estimate (dotted lines), i.e. $\dot{{x}}_{i}(t;
\by)/\dot{{x}}_{i}(t; \buno)$.}} \label{Figure-compensated}
\end{figure}
In summary, in this work we have derived the conditions allowing a
distributed consensus mechanism to reach globally optimal decision
statistics, in the presence of multipath propagation in the link
between each pair of nodes. The method is valid for real (baseband)
channels and it requires that the summation of the channel
coefficients over each link is strictly positive. Thanks to the
closed form expression derived in this paper, we have also shown how
to get unbiased, globally optimal estimates, without the need to
resolve the signals received from different nodes (thus allowing for
a very simple MAC mechanism), nor to estimate the channel
coefficients. A crucial investigation, motivated from this work, is
the design of the most appropriate radio interface allowing for the
internode interaction enabling the distributed consensus.
\section{Appendix}
Part of the proof of the theorem is based on the same approach we
followed in \cite[Appendix C]{Scutari-Barbarossa-Delay-SPAWC-07}.
Thus, in the following we make use of the general results of
\cite[Appendix C]{Scutari-Barbarossa-Delay-SPAWC-07} and we focus
only on the specific aspects of the model used in this paper.
We start studying the existence of a synchronized state in the form (\ref%
{bias_Theo}). Then, we prove that such a state is also globally
asymptotically stable (cf. Definition 1). Throughout the proof, we
assume that
conditions \textbf{a1)}-\textbf{a4) }are satisfied and that the digraph ${%
\mathscr{G}}$ associated to (\ref{linear delayed system}) is
QSC. In the following, for the sake of notation simplicity,
we drop the dependence of the state function from the observation,
as this dependence does not play any role in our proof.\medskip

\noindent \textbf{Existence of a synchronized state}: The set of
delayed differential equations (\ref{linear delayed system}) admits
a solution in the form
\begin{equation}
x_{i}^{\star }(t)={\alpha }t+x_{i,0}^{\star },\,\quad i=1,\ldots ,N,\vspace{%
-0.1cm}  \label{subclass_of_synch_state}
\end{equation}%
where ${\alpha \in
\mathbb{R}
}$ and $\{x_{i,0}^{\star }\}$ are a set of coefficients that depend
in general on the
system parameters and on the initial conditions, if and only if $%
\{x_{i}^{\star }(t)\}$ satisfies (\ref{linear delayed system}), i.e., if and
only if there exist $\alpha $ and $\{x_{i,0}^{\star }\}$\ such that the
following system of linear equations is feasible:
\begin{equation}
\frac{c_{i}\Delta _{i}(\alpha )}{K}+\sum\nolimits_{j\in \mathcal{N}%
_{i}}\left( \sum\nolimits_{l=1}^{L}a_{ij}^{(l)}\right) \left( x_{j,0}^{\star
}-x_{i,0}^{\star }\right) =0,\vspace{-0.2cm}  \label{system-linear-eq}
\end{equation}%
$\forall i=1,\ldots ,N,$ where\vspace{-0.2cm}
\begin{equation}
\Delta _{i}(\alpha )\triangleq g_{i}(y_{i})-\alpha \left( 1+\frac{K}{c_{i}}%
\sum\nolimits_{j\in \mathcal{N}_{i}}\sum\nolimits_{l=1}^{L}a_{ij}^{(l)}\,%
\tau _{ij}^{(l)}\right) .  \label{Delta_omega}
\end{equation}%
Introducing the weighted Laplacian $\mathbf{L}=\mathbf{L}({\mathscr{G}})$
associated to ${\mathscr{G}}$, system (\ref{system-linear-eq}) can be
equivalently rewritten in vector form as\vspace{-0.1cm}
\begin{equation}
K\mathbf{Lx}_{0}^{\star }=\mathbf{D}_{\mathbf{c}}\boldsymbol{\Delta} \,(\alpha ),\vspace{%
-0.1cm}  \label{sys-linear-eqs}
\end{equation}%
where $\mathbf{x}_{0}^{\star }\triangleq \lbrack x_{1,0}^{\star },\ldots
,x_{N,0}^{\star }]^{T},$ $\mathbf{D}_{\mathbf{c}}\triangleq \limfunc{diag}%
(c_{1},\ldots ,c_{N}),$ and $\boldsymbol{\Delta} \,(\alpha )$ $\triangleq \lbrack \Delta
_{1}(\alpha ),\ldots ,\Delta _{N}(\alpha )]^{T},$ with $\Delta _{i}(\alpha )$
defined in (\ref{Delta_omega}). Observe that, because of \textbf{a2) }and
the quasi-strong connectivity of ${\mathscr{G}}$, the graph Laplacian $%
\mathbf{L}$ has the following properties: i) $\mathrm{rank}(\mathbf{L})=N-1;$
ii) $\mathcal{N}(\mathbf{L})=\mathcal{R}\left( \mathbf{1}_{N}\right) ;$
and iii) $\mathcal{N}(\mathbf{L}^{T})=\mathcal{R}\left( \mathbf{\gamma }%
\right) ,$ where $\mathcal{N}(\cdot)$ and $\mathcal{R}(\cdot)$ denote the (right) null-space and the range space operators, respectively,
 and $\mathbf{\gamma }$ is a left eigenvector of $\mathbf{L}$
corresponding to the (simple) zero eigenvalue of $\mathbf{L}$, i.e., $%
\mathbf{\gamma }^{T}\mathbf{L}=\mathbf{0}^{T}$. It follows from i)-iii)
that, for any \emph{given} $\alpha ,$ (\ref{sys-linear-eqs}) admits a
solution if and only if $\ \mathbf{D}_{\mathbf{c}}\boldsymbol{\Delta} \,(\alpha )\in
\mathcal{R}\left(\mathbf{L}\right)$. Using again properties i)-iii),
we have: $\mathbf{D}_{\mathbf{c}}\boldsymbol{\Delta} (\alpha )\in \mathcal{R}\left (
\mathbf{L}\right) \Leftrightarrow \mathbf{\gamma }^{T}\mathbf{D}_{\mathbf{c}%
}\boldsymbol{\Delta} (\alpha )=0.$ It is easy to check that the value of $\alpha $ that
satisfies the latter condition is $\alpha =\alpha ^{\star },$ with $\alpha
^{\star }$ defined in (\ref{bias_Theo}). Hence, if $\alpha =\alpha ^{\star },
$ the synchronized state in the desired form (\ref{subclass_of_synch_state})
is a solution to (\ref{linear delayed system}), for any given set of $\{\tau
_{ij}^{(l)}\},$ $\{g_{i}\},$ $\{c_{i}\}$, $\{a_{ij}^{(l)}\}$ and $K>0$.

The structure of the left eigenvector $\mathbf{\gamma }$ associated
to the zero eigenvalue of $\mathbf{L}$ as given in the theorem follows from \ \cite[%
Lemma 4]{Scutari-Barbarossa-Delay-SPAWC-07}.

Setting $\alpha =\alpha ^{\star }$, system (\ref{sys-linear-eqs}) admits $%
\infty ^{1}$ solutions, given by $\mathbf{x}_{0}^{\star }=\frac{1}{K}\mathbf{%
\mathbf{L}^{\sharp }D}_{\mathbf{c}}\boldsymbol{\Delta} (\alpha ^{\star })+\mathcal{R}\left(
\mathbf{1}_{N}\right)\triangleq \overline{\mathbf{x}}_{0}+\mathcal{R}\left(\mathbf{%
1}_{N}\right),$ where\vspace{-0.1cm}
\begin{equation}
\overline{\mathbf{x}}_{0}\triangleq \mathbf{\mathbf{L}^{\sharp }D}_{\mathbf{c%
}}\boldsymbol{\Delta} (\alpha ^{\star })/K,\vspace{-0.1cm}  \label{minimum-norm-solution}
\end{equation}%
$\Delta _{i}(\alpha ^{\star })$ is obtained by (\ref{Delta_omega}) setting $%
\alpha =\alpha ^{\star }$ and $\mathbf{\mathbf{L}}^{\sharp }$
is the generalized inverse of the Laplacian $\mathbf{\mathbf{L}}$.\medskip

\noindent \textbf{Global asymptotic stability}: We prove now that the
synchronized state of system (\ref{linear delayed system}) is globally
asymptotically stable. To this end, we use the following intermediate
results.

Let 
$\mathbf{%
\mathbb{C}
}_{+}=\{s\in \mathbf{%
\mathbb{C}
}:\limfunc{Re}\{s\}>0\},$ $\mathbf{%
\mathbb{C}
}_{-}=\{s\in \mathbf{%
\mathbb{C}
}:\limfunc{Re}\{s\}<0\},$ and $\overline{\mathbf{%
\mathbb{C}
}}_{+}$ be the closure of $\mathbf{%
\mathbb{C}
}_{+}$, i.e., $\overline{\mathbf{%
\mathbb{C}
}}_{+}=\{s\in \mathbf{%
\mathbb{C}
}:\limfunc{Re}\{s\}\geq 0\}.$ Denoting by $\mathcal{H}^{n\times m}$ the set
of $n\times m$\ matrices whose entries are analytic\footnote{%
A complex function is said to be analytic (or holomorphic) on a region $%
\mathcal{D\subseteq
\mathbb{C}
}$ if it is complex differentiable at every point in $\mathcal{D}$, i.e.,
for any $z_{0}\in \mathcal{D}$ the function satisfies the Cauchy-Riemann
equations and has continuous first partial derivatives in the neighborhood
of $z_{0}$ (see, e.g., \cite[Theorem 11.2]{Rudin-book}).} and bounded
functions in $\mathbf{%
\mathbb{C}
}_{+},$ let us introduce the degree matrix $%
\boldsymbol{\Delta}\geq \mathbf{0}$ (where \textquotedblleft $\geq$\textquotedblright  has to be intended component-wise)  and the complex matrix $\mathbf{H}(s)\in \mathbf{%
\mathbb{C}
}^{N\times N}\mathbf{,}$ defined respectively as%
\begin{equation}
\boldsymbol{\Delta}%
\triangleq \limfunc{diag}\left( k_{1}\deg \nolimits_{\text{in}%
}(v_{1}),...,k_{N}\deg \nolimits_{\text{in}}(v_{N})\right) ,\qquad \left[
\mathbf{H}(s)\right] _{ij}\triangleq \left\{
\begin{array}{ll}
0, & \text{if }i=j, \\
k_{i}\sum\nolimits_{l=1}^{L}a_{ij}^{(l)}e^{-s\tau _{ij}^{(l)}}, & \text{if }%
i\neq j,%
\end{array}%
\right. \vspace{-0.2cm}  \label{def_H}
\end{equation}%
where $\deg \nolimits_{\text{in}}(v_{i})=$ $\sum_{j\in \mathcal{N}%
_{i}}\sum\nolimits_{l=1}^{L}a_{ij}^{(l)}\geq 0$ is the in-degree of
node $v_i$ and $k_{i}\triangleq K/c_{i}>0.$ Observe that
$\mathbf{H}(s)\in \mathcal{H}^{N\times N}\mathbf{.}$

\begin{lemma}
\label{theorem_stability_of_dynamical_system}Consider the following
linear
functional differential equation:%
\begin{equation}
\begin{array}{l}
\dot{{x}}_{i}(t)=k_{i}\dsum\nolimits_{j\in \mathcal{N}_{i}}\dsum%
\nolimits_{l=1}^{L}a_{ij}^{(l)}\left( x_{j}(t-\tau
_{ij}^{(l)})-x_{i}(t)\right) ,\quad \,\,t>0, \\
x_{i}(\vartheta )=\phi _{i}(\vartheta ),\quad \vartheta \in \lbrack -\tau ,\
0].%
\end{array}%
\quad
\begin{array}{l}
i=1,\ldots ,N,%
\end{array}
\label{system-translated}
\end{equation}%
and assume that the following conditions are satisfied:\vspace{-0.2cm}

\begin{enumerate}
\item[\textbf{b1.}] The initial value functions $\mathbf{\phi }$ are taken
in the set $\ \mathcal{C}^{1}$ of continuously differentiable functions that
are bounded in the norm\footnote{%
We used, without loss of generality, as vector norm in $%
\mathbb{R}
^{N}$ the infinity norm $\left\Vert \cdot \right\Vert _{\infty },$ defined
as $\left\Vert \mathbf{x}\right\Vert _{\infty }\triangleq \max_{i}|x_{i}|.$
Of course, the same conclusions can be obtained using any other norm.} $%
\left\vert \mathbf{\phi }\right\vert _{s}=\sup_{-\tau \leq \vartheta \leq
0}\left\Vert \mathbf{\phi }(\vartheta )\right\Vert _{\infty },\ $and the
solutions $\mathbf{x}(t)$ with initial functions $\mathbf{\phi }$ are
bounded; \vspace{-0.2cm}

\item[\textbf{b2.}] The characteristic equation associated to (\ref%
{system-translated})
\begin{equation}
p(s)\triangleq \det \left( s\mathbf{I}+%
\boldsymbol{\Delta}%
-\mathbf{H}(s)\right) =0,  \label{def_characteristic-function}
\end{equation}%
with $%
\boldsymbol{\Delta}%
$ and $\mathbf{H}(s)$ defined in (\ref{def_H}), has all roots $%
\{s_{r}\}_{r}\in $ $\mathbf{%
\mathbb{C}
}_{-},$ with \ at most one \emph{simple} root at $s=0.$\footnote{%
We assume, w.l.o.g., that the roots $\{s_{r}\}$ are arranged in
nonincreasing order with respect to the real part, i.e., $0=\limfunc{Re}%
\{s_{0}\}>\limfunc{Re}\{s_{1}\}\geq \limfunc{Re}\{s_{2}\}\geq ...$.}
\end{enumerate}

Then, system (\ref{system-translated}) is marginally stable, i.e., $\forall
\mathbf{\phi }\in \mathcal{C}^{1}$ and \ $\limfunc{Re}\{s_{1}\}<c<0,$\ there
exist $t_{1}$ and $\alpha ,$ with $t_{0}<t_{1}<+\infty $ and $0<\alpha
<+\infty ,$ independent of $\mathbf{\phi ,}$ and a vector $\mathbf{x}%
^{\infty },$ with $\left\Vert \mathbf{x}^{\infty }\right\Vert <+\infty ,$
such that
\begin{equation}
\left\Vert \mathbf{x}(t)-\mathbf{x}^{\infty }\right\Vert \leq \alpha
\left\vert \mathbf{\phi }\right\vert _{s}e^{ct},\qquad \forall t>t_{1}.
\label{inequality_convergence_rate}
\end{equation}
\end{lemma}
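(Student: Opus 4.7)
The plan is to apply Laplace transform techniques to the delay differential system (\ref{system-translated}) and obtain the solution as a Bromwich contour integral that we then deform into the stable half-plane, using \textbf{b2} to control the spectrum.

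First, I would take the Laplace transform of (\ref{system-translated}) and handle the delayed terms $\int_0^\infty e^{-st}x_j(t-\tau_{ij}^{(l)})\,dt$ by splitting them into a piece proportional to $X_j(s)$ and a finite ``history'' piece that only involves $\phi_j$ on $[-\tau_{ij}^{(l)},0]$. Rearranging yields
\begin{equation}
\mathbf{M}(s)\,\mathbf{X}(s)=\mathbf{F}_\phi(s),\qquad \mathbf{M}(s)\triangleq s\mathbf{I}+\boldsymbol{\Delta}-\mathbf{H}(s),
\end{equation}
so that $\det\mathbf{M}(s)=p(s)$ (cf.\ (\ref{def_characteristic-function})) and $\mathbf{F}_\phi(s)$ is an entire, vector-valued function depending linearly on $\phi$, built from $\phi(0)$ and integrals of the form $\int_{-\tau_{ij}^{(l)}}^{0}e^{-s(\vartheta+\tau_{ij}^{(l)})}\phi_j(\vartheta)\,d\vartheta$. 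Thus $\mathbf{X}(s)=\mathbf{M}^{-1}(s)\mathbf{F}_\phi(s)$ is meromorphic with poles contained in the zero set of $p(s)$.

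Next, I would invoke \textbf{b2} together with the classical fact that, for a retarded-type exponential polynomial, the number of zeros of $p(s)$ in any right half-plane $\{\limfunc{Re}\{s\}\geq c\}$ is finite. Fix any $c\in(\limfunc{Re}\{s_1\},0)$; then between a line $\limfunc{Re}\{s\}=\sigma>0$ (on which the Bromwich integral is initially defined) and $\limfunc{Re}\{s\}=c$, the only pole of $\mathbf{X}(s)$ is the simple one at $s=0$. Shifting the contour yields
\begin{equation}
\mathbf{x}(t)=\mathbf{x}^\infty+\frac{1}{2\pi i}\int_{c-i\infty}^{c+i\infty}\mathbf{M}^{-1}(s)\mathbf{F}_\phi(s)\,e^{st}\,ds,
\end{equation}
where $\mathbf{x}^\infty\triangleq\mathop{\mathrm{Res}}_{s=0}\mathbf{M}^{-1}(s)\mathbf{F}_\phi(s)$. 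Since $s=0$ is simple and $\mathbf{F}_\phi$ is analytic at the origin, $\|\mathbf{x}^\infty\|<\infty$, and by linearity of $\phi\mapsto\mathbf{F}_\phi$ the quantity $\|\mathbf{x}^\infty\|$ is controlled by $|\mathbf{\phi}|_s$.

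Finally, factoring $e^{ct}$ out of the remaining integral, the task reduces to showing that $\omega\mapsto\|\mathbf{M}^{-1}(c+i\omega)\mathbf{F}_\phi(c+i\omega)\|$ is in $L^1(\mathbb{R})$ with an $L^1$-norm bounded by $C|\mathbf{\phi}|_s$ for some $C$ independent of $\phi$. On the line $\limfunc{Re}\{s\}=c$, the entries of $\mathbf{H}(s)$ are uniformly bounded (their exponentials $e^{-s\tau_{ij}^{(l)}}$ have real part $e^{-c\tau_{ij}^{(l)}}$), while the diagonal term $s\mathbf{I}+\boldsymbol{\Delta}$ dominates as $|\omega|\to\infty$, so $\|\mathbf{M}^{-1}(c+i\omega)\|=O(1/|\omega|)$. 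A further $O(1/|\omega|)$ decay of $\mathbf{F}_\phi(c+i\omega)$ follows by one integration by parts using the continuous differentiability in \textbf{b1}. Together this gives the desired bound $\|\mathbf{x}(t)-\mathbf{x}^\infty\|\leq\alpha|\mathbf{\phi}|_s e^{ct}$ for all $t>t_1$, with $\alpha,t_1$ independent of $\phi$.

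The main obstacle is this last step: establishing a $\phi$-uniform $L^1$ bound on the integrand along the shifted Bromwich line. The two ingredients—the $1/|\omega|$ decay of $\mathbf{M}^{-1}$ (a standard property of retarded-type matrix pencils) and the $1/|\omega|$ decay of $\mathbf{F}_\phi$ (from the smoothness of $\phi$)—are individually classical, but their combination, together with the careful bookkeeping needed to extract the constant $\alpha$ explicitly in terms of $|\mathbf{\phi}|_s$, is where most of the work lies. The other ingredients (the reduction to a contour integral and the evaluation of the residue at $s=0$ yielding $\mathbf{x}^\infty$) are essentially mechanical once \textbf{b2} is in hand.
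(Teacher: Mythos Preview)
Your Laplace--transform/Bromwich--contour route is exactly the classical argument the paper invokes: it omits the proof of this lemma and points to \cite[Lemma~5]{Scutari-Barbarossa-Delay-SPAWC-07} together with the canonical retarded--system framework of \cite[Ch.~6]{Bellman-Cooke} and \cite[Ch.~3]{Gu-book}, which is precisely the contour--shift/residue analysis you outline. So strategically you are aligned with the paper.

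There is, however, one concrete slip in your last step. You claim that ``a further $O(1/|\omega|)$ decay of $\mathbf{F}_\phi(c+i\omega)$ follows by one integration by parts.'' That is true for the history integrals $\int_{-\tau_{ij}^{(l)}}^{0}e^{-s(\vartheta+\tau_{ij}^{(l)})}\phi_j(\vartheta)\,d\vartheta$, but $\mathbf{F}_\phi(s)$ also contains the constant block $\phi(0)$, which does \emph{not} decay at all, so $\mathbf{F}_\phi(c+i\omega)=O(1)$ and $\mathbf{M}^{-1}\mathbf{F}_\phi$ is only $O(1/|\omega|)$ along the shifted line---not $L^1$. The standard repair (and what the Bellman--Cooke machinery effectively does) is to peel off that piece algebraically: from $\mathbf{M}(s)=s\mathbf{I}+(\boldsymbol{\Delta}-\mathbf{H}(s))$ one has
\[
\mathbf{M}^{-1}(s)\,\phi(0)=\frac{1}{s}\,\phi(0)-\frac{1}{s}\,\mathbf{M}^{-1}(s)\bigl(\boldsymbol{\Delta}-\mathbf{H}(s)\bigr)\phi(0).
\]
The first term is handled directly (its only singularity in $\overline{\mathbb{C}}_+$ is the simple pole at $0$ and it feeds into $\mathbf{x}^\infty$), while the second term now carries an extra $1/s$ and is genuinely $O(1/|\omega|^2)$ on $\limfunc{Re}\{s\}=c$, restoring the $L^1$ bound with a constant proportional to $|\phi|_s$. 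With this adjustment your argument goes through and matches the referenced proof.
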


\begin{proof}
Because of space limitation, we omit the proof that can be obtained
following the same steps of the proof in \cite[Lemma 5]%
{Scutari-Barbarossa-Delay-SPAWC-07}, after observing that system (\ref%
{system-translated}) can be rewritten in the canonical form of \cite[Ch. 6,
Eq. (6.3.2)]{Bellman-Cooke}, \cite[Ch. 3, Eq. (3.1)]{Gu-book}.
\end{proof}

\begin{lemma}[{\protect\cite[Theorem 2.2]{boyd-SH}}]
\label{Lemma-Boyd}Let $\mathbf{H}(s)\in \mathcal{H}^{N\times N}$ and $\rho
\left( \mathbf{H}(s)\right) $ denote the spectral radius of $\mathbf{H(}s%
\mathbf{).}$ Then, $\rho \left( \mathbf{H}(s)\right) $ is a subharmonic%
\footnote{{\scriptsize See, e.g., \cite{boyd-SH}, for the definition of
subharmonic function.}} bounded (above) function on $\overline{\mathbf{%
\mathbb{C}
}}_{+}$.\hfill $\square$
\end{lemma}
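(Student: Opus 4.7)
The plan is to split the claim into the two assertions $\rho(\mathbf{H}(s))$ is bounded above and $\rho(\mathbf{H}(s))$ is subharmonic on $\overline{\mathbb{C}}_+$, proving boundedness as a direct consequence of the hypothesis and subharmonicity by combining Gelfand's formula with a monotone-subsequence argument so as to avoid ever taking a non-monotone limit of subharmonic functions.

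\textbf{Boundedness.} By hypothesis each entry $[\mathbf{H}(s)]_{ij}$ belongs to $\mathcal{H}$, i.e., is analytic and bounded on $\mathbb{C}_+$, so there exists $M>0$ with $|[\mathbf{H}(s)]_{ij}|\le M$ for all $i,j$ and all $s\in \overline{\mathbb{C}}_+$. Any induced matrix norm of $\mathbf{H}(s)$ is then bounded by a constant (depending only on $N$) times $M$, and since $\rho(\mathbf{A})\le \|\mathbf{A}\|$ for every submultiplicative norm, $\rho(\mathbf{H}(s))$ is globally bounded on $\overline{\mathbb{C}}_+$.

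\textbf{Subharmonicity.} The plan rests on two classical building blocks. First, for any fixed $v\in \mathbb{C}^N$ with $\|v\|=1$, the map $s\mapsto \mathbf{H}(s)^n v$ is holomorphic from $\mathbb{C}_+$ to $\mathbb{C}^N$, so $\log\|\mathbf{H}(s)^n v\|$ is subharmonic; taking the supremum over the unit sphere, and noting that in our finite-dimensional setting the operator norm of a continuous matrix-valued function is continuous in $s$ so that no upper semicontinuous regularization is required, I would conclude that $s\mapsto \log\|\mathbf{H}(s)^n\|$ is subharmonic on $\mathbb{C}_+$ for every $n\ge 1$. Second, Gelfand's formula gives $\log \rho(\mathbf{H}(s))=\lim_{n\to\infty} n^{-1}\log\|\mathbf{H}(s)^n\|$. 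To convert this into subharmonicity of the limit, I would exploit submultiplicativity $\log\|\mathbf{H}^{m+n}\|\le \log\|\mathbf{H}^m\|+\log\|\mathbf{H}^n\|$: along the dyadic subsequence $n_k=2^k$ it implies that $u_k(s):=2^{-k}\log\|\mathbf{H}(s)^{2^k}\|$ is monotone decreasing in $k$. A decreasing limit of subharmonic functions is subharmonic (the submean-value inequality survives the monotone convergence step applied to $u_k-u_0\le 0$), so $\log \rho(\mathbf{H}(s))$ is subharmonic on $\mathbb{C}_+$, and $\rho(\mathbf{H}(s))=\exp(\log \rho(\mathbf{H}(s)))$ is subharmonic too because composition with a convex nondecreasing function preserves subharmonicity. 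Extension of the subharmonicity to the closed half-plane $\overline{\mathbb{C}}_+$ follows from the standard boundary behaviour of elements of $\mathcal{H}$ (Fatou-type existence of nontangential limits almost everywhere on the imaginary axis) together with upper semicontinuity.

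\textbf{Main obstacle.} The crux is that, naively, $\log \rho(\mathbf{H}(s))=\inf_n n^{-1}\log\|\mathbf{H}(s)^n\|$ by Fekete's lemma, and a pointwise infimum of subharmonic functions is \emph{not} in general subharmonic; the dyadic subsequence trick is precisely what removes this obstruction, because along powers of two the submultiplicativity gives a genuinely monotone decreasing sequence whose limit is guaranteed subharmonic. This sidesteps the heavier resolvent-integral machinery that Vesentini uses to prove the analogous statement in the abstract Banach-algebra setting. A secondary but much easier point is the passage from subharmonicity of $\log\|\mathbf{H}(s)^n v\|$ for a single $v$ to that of $\log\|\mathbf{H}(s)^n\|$, which in infinite dimensions would need upper semicontinuous regularization but is automatic here because of the continuity of the operator norm on $N\times N$ matrices.
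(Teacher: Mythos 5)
Your proposal is essentially correct, but you should be aware that the paper does not prove this lemma at all: it is imported verbatim as Theorem~2.2 of the cited Boyd--Desoer reference, whose own proof rests on Vesentini's theorem that the spectral radius of a holomorphic Banach-algebra-valued map is subharmonic. What you supply is therefore a self-contained, finite-dimensional proof of the cited result: boundedness from entrywise bounds plus $\rho(\mathbf{A})\leq \Vert\mathbf{A}\Vert$, and subharmonicity from the chain (i) $\log\Vert\mathbf{H}(s)^{n}\Vert$ is subharmonic as a continuous supremum of log-moduli of holomorphic scalars, (ii) submultiplicativity makes $2^{-k}\log\Vert\mathbf{H}(s)^{2^{k}}\Vert$ nonincreasing in $k$, (iii) Gelfand's formula identifies the decreasing limit with $\log\rho(\mathbf{H}(s))$, and (iv) decreasing limits of subharmonic functions, and compositions with convex increasing functions, preserve subharmonicity. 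Each of these steps is sound (including the $-\infty$ values where $\mathbf{H}(s)^n$ vanishes), and the dyadic trick correctly circumvents the fact that a general pointwise infimum of subharmonic functions need not be subharmonic; this is in fact the standard elementary route to Vesentini's theorem in the matrix case, so you have effectively reproved the quoted result rather than merely invoked it. The one soft spot is your last sentence: for a general $\mathbf{H}\in\mathcal{H}^{N\times N}$ the entries are defined and bounded only on the open half-plane $\mathbb{C}_{+}$, so ``subharmonic and bounded above on $\overline{\mathbb{C}}_{+}$'' must be read in Boyd--Desoer's specific sense, and your appeal to Fatou-type nontangential limits is a gesture rather than an argument. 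For the matrix $\mathbf{H}(s)$ actually used in this paper the entries are finite sums of terms $a\,e^{-s\tau}$ with $\tau\geq 0$, hence entire, and the boundary issue disappears entirely.
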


We are ready to prove the global asymptotic stability of the synchronized
state of (\ref{linear delayed system}). Applying the following change of
variables: $\Psi _{i}(t)\triangleq x_{i}(t)-\left( \alpha ^{\star }t+%
\overline{x}_{i,0}\right) $, for all $i=1,\ldots ,N,$ 
where $\alpha ^{\star }$ and $\{\overline{x}_{i,0}\}$ are defined in (\ref%
{bias_Theo}) and (\ref{minimum-norm-solution}), respectively, and using (\ref%
{minimum-norm-solution}), the original system (\ref{linear delayed system})
can be equivalently rewritten in terms of $\{\Psi _{i}(t)\}_{i}$ as
\begin{equation}
\dot{\Psi}_{i}(t)=k_{i}\dsum\nolimits_{j\in \mathcal{N}_{i}}\dsum%
\nolimits_{l=1}^{L}a_{ij}^{(l)}\left( \Psi _{j}(t-\tau _{ij}^{(l)})-\Psi
_{i}(t)\right) ,\text{ \ \ }t\geq 0,  \label{system-translated_2}
\end{equation}%
with $\Psi _{i}(\vartheta )=\phi _{i}(\vartheta )\triangleq \widetilde{\phi }%
_{i}(\vartheta )-\alpha ^{\star }\vartheta -\overline{x}_{i,0}$ for $%
\vartheta \in \lbrack -\tau ,\ 0],$ where $\{\widetilde{\phi _{i}}\}$ are
the initial value functions of the original system (\ref{linear delayed
system}).

It follows from (\ref{system-translated_2}) that the synchronized state of
system (\ref{linear delayed system}), as given in (\ref%
{subclass_of_synch_state}) (with $\alpha =\alpha ^{\star }$), is globally
asymptotically stable (according to Definition \ref{Definition_sync-state})
if system (\ref{system-translated_2}) is marginally stable. According to
Lemma \ref{theorem_stability_of_dynamical_system}, the marginal stability of
system (\ref{system-translated_2}) is guaranteed if: \textbf{b1}) the
trajectories $\{\Psi _{i}(t)\}$ are bounded for all $t>0,$ given $\mathbf{%
\phi }\in \mathcal{C}^{1};$ \textbf{b2}) the characteristic equation (\ref%
{def_characteristic-function}) associated to (\ref{system-translated_2}),
has all roots in $\mathbf{%
\mathbb{C}
}_{-},$ with \ at most one \emph{simple} root at $s=0.$

Following the same steps as in \cite[Appendix C]%
{Scutari-Barbarossa-Delay-SPAWC-07}, one can prove that, under \textbf{a1})-%
\textbf{a3}), all the solutions $\{\Psi _{i}(t)\}$ to (\ref%
{system-translated_2}), with initial conditions in $\mathcal{C}^{1},$\ are
uniformly bounded, as required by assumption \textbf{b1}) in Lemma \ref%
{theorem_stability_of_dynamical_system}. Because of space limitation we omit
the details. We study instead the characteristic equation (\ref%
{def_characteristic-function}), and prove that, under \textbf{a1})-\textbf{a3%
}) and the quasi-strong connectivity of the digraph, assumption \textbf{b2})
of Lemma \ref{theorem_stability_of_dynamical_system} is satisfied.

First of all, observe that, since $\boldsymbol{\Delta }-\mathbf{H}(0)=K%
\mathbf{D}_{c}\mathbf{L}$, we have
\begin{equation}
p(0)=\det \left(
\boldsymbol{\Delta}%
-\mathbf{H}(0)\right) =\det \left(K \mathbf{D}_{c}\right) \det \left(
\mathbf{L}\right) =0,  \label{zero_solution}
\end{equation}%
where the last equality in (\ref{zero_solution}) is due to $\mathrm{rank}(%
\mathbf{L})=N-1$. It follows from (\ref{zero_solution}) that $p(s)$ has a
root in $s=0,$ corresponding to the zero eigenvalue of the Laplacian $%
\mathbf{L}$ (recall that $\det \left(K \mathbf{D}_{c}\right) \neq 0$). Since
the digraph is assumed to be QSC, according to \cite[Corollary 2]%
{Scutari-Barbarossa-Delay-SPAWC-07}, such a root is simple. Thus, to
complete the proof, we need to show that $p(s)$ does not have any solution
in $\overline{\mathbf{%
\mathbb{C}
}}_{+}\backslash \{0\},$ i.e.,%
\begin{equation}
\det \left( s\mathbf{I+}%
\boldsymbol{\Delta}%
-\mathbf{H}(s)\right) \neq 0,\quad \forall s\in \overline{\mathbf{%
\mathbb{C}
}}_{+}\backslash \{0\}.  \label{cond_zeros_in_C_r}
\end{equation}%
Since $s\mathbf{I+}%
\boldsymbol{\Delta}%
$ is nonsingular in $\overline{\mathbf{%
\mathbb{C}
}}_{+}\backslash \{0\}$ [recall that, under \textbf{a1)-a2)}, $%
\boldsymbol{\Delta}%
\geq \mathbf{0}$, with at least one positive diagonal entry], (\ref%
{cond_zeros_in_C_r}) is equivalent to
\begin{equation}
\det \left( \mathbf{I-}\left( s\mathbf{I+}%
\boldsymbol{\Delta}%
\right) ^{-1}\mathbf{H}(s)\right) \neq 0,\quad \forall s\in \overline{%
\mathbf{%
\mathbb{C}
}}_{+}\backslash \{0\},
\end{equation}%
which leads to the following sufficient condition for (\ref%
{cond_zeros_in_C_r}):%
\begin{equation}
\rho \left( s\right) \triangleq \rho \left( \left( s\mathbf{I+}%
\boldsymbol{\Delta}%
\right) ^{-1}\mathbf{H}(s)\right) <1,\quad \forall s\in \overline{\mathbf{%
\mathbb{C}
}}_{+}\backslash \{0\}.  \label{cond_spectrum_less_one}
\end{equation}%
Since $\left( s\mathbf{I+}%
\boldsymbol{\Delta}%
\right) ^{-1}\in \mathcal{H}^{N\times N}$ and $\mathbf{H}(s)\in \mathcal{H}%
^{N\times N}$, it follows from Lemma \ref{Lemma-Boyd} that the spectral
radius $\rho \left( s\right) $ in (\ref{cond_spectrum_less_one}) is a
subharmonic function on $\overline{\mathbf{%
\mathbb{C}
}}_{+}.$ As a direct consequence, we have, among all, that $\rho \left(
s\right) $ is a continuous bounded function on $\overline{\mathbf{%
\mathbb{C}
}}_{+}$ and satisfies the \emph{maximum modulus principle }(see, e.g., \cite%
{boyd-SH} and references therein): $\rho \left( s\right) $ achieves its
\emph{global }maximum only on the boundary of $\overline{\mathbf{%
\mathbb{C}
}}_{+}.$ Since $\rho \left( s\right) $ is strictly proper in $\overline{%
\mathbf{%
\mathbb{C}
}}_{+},$ i.e., $\rho \left( s\right) \rightarrow 0$ as $\left\vert
s\right\vert \rightarrow +\infty $ while keeping $s\in \overline{\mathbf{%
\mathbb{C}
}}_{+}$, it follows that $\sup_{s\in \mathbf{%
\mathbb{C}
}_{+}}\rho \left( s\right) <\sup_{\omega \in
\mathbb{R}
}\rho \left( j\omega \right) .$ According to the latter inequality,
condition (\ref{cond_spectrum_less_one}) is satisfied if
\begin{equation}
\rho \left( j\omega \right) =\rho \left( \left( j\omega \mathbf{I+}%
\boldsymbol{\Delta}%
\right) ^{-1}\mathbf{H}(j\omega )\right) <1,\quad \forall \omega \in
\mathbb{R}
\backslash \{0\}.
\end{equation}%
Denoting by $\left\Vert \mathbf{A}\right\Vert _{\infty }\triangleq
\max_{r}\sum\nolimits_{q}|\left[ \mathbf{A}\right] _{rq}|$ the \emph{maximum
row sum matrix norm }and using $\rho \left( \mathbf{A}\right) \leq
\left\Vert \mathbf{A}\right\Vert _{\infty }$ $\forall \mathbf{A\in
\mathbb{C}
}^{N\times M}$ \cite{Horn85}, we have

\begin{eqnarray*}
\rho \left( j\omega \right) & \leq
&\left\Vert \left( j\omega \mathbf{I+}%
\boldsymbol{\Delta}%
\right) ^{-1}\mathbf{H}(j\omega )\right\Vert _{\infty }\!\!=\!\max_{r}\frac{%
\tsum\nolimits_{q\in \mathcal{N}_{r}}\left\vert k_{r}H_{rq}(j\omega
)\right\vert }{\left\vert j\omega +k_{r}\tsum\nolimits_{q\in \mathcal{N}%
_{r}}H_{rq}(0)\right\vert }  \leq \max_{r}\frac{\tsum\nolimits_{q\in
\mathcal{N}_{r}}\left\vert
H_{rq}(j\omega )\right\vert }{\tsum\nolimits_{q\in \mathcal{N}_{r}}H_{rq}(0)}%
\leq 1,
\end{eqnarray*}
where in the last inequality we used (\ref{constr_channels}) [see
assumption \textbf{a2)}]. Since in the second inequality, the
equality is reached if and only if $\omega =0,$ we have $\rho \left(
j\omega \right) <1$ for all $\omega \neq 0,$ which guarantees that
condition (\ref{cond_spectrum_less_one}) is satisfied. Hence,
according to Lemma \ref{theorem_stability_of_dynamical_system},
given any set of initial conditions $\{\phi _{i}\}$ satisfying
\textbf{a4)}, the
trajectories $\mathbf{\Psi }(t)\rightarrow \mathbf{\Psi }^{\infty }$ as $%
t\rightarrow +\infty ,$ with exponential rate arbitrarily close to $%
r\triangleq -\{\min_{i}|\limfunc{Re}\{s_{i}\}|:p(s_{i})=0\,\,\text{and}%
\,\,s_{i}\neq 0\}$, where $p(s)$ is defined in (\ref%
{def_characteristic-function}) and $\mathbf{\Psi }^{\infty }\in \mathcal{R}(\mathbf{1}_{N})$ (because of $\mathbf{L\Psi }^{\infty }=\mathbf{0}$ and $%
\mathrm{rank}(\mathbf{L})=N-1$\textbf{)}. In other words, system (\ref%
{system-translated_2}) exponentially reaches the consensus on the state.
\medskip

\noindent \textbf{Necessity}: The necessity of quasi-strong connectivity of the digraph for the
network to reach a \emph{global} consensus  can be proved as in
\cite[Appendix C.2]{Scutari-Barbarossa-Delay-SPAWC-07}  by showing that, if the digraph associated to (\ref{System-model}) is not QSC,
different clusters of nodes synchronize on different values \cite[Corollary 1]{Scutari-Barbarossa-Delay-SPAWC-07}. This local synchronization is in contrast with the
definition of (global) synchronization, as given in Definition 1. Hence, if the overall network has to
synchronize, the digraph associated to the system must be QSC.

\def\baselinestretch{0.90}
\normalsize

\end{document}